\newif\ifcount
\theoremstyle{plain}
\def\namedlabel#1#2{\begingroup
  #2%
  \def\@currentlabel{#2}%
  \phantomsection\label{#1}\endgroup
}
\newcommand{\Imid}{\mathcal{I}_{\mathrm{mid}}}
\newcommand{\uu}{\mathbf{u}}
\newcommand{\vv}{\mathbf{v}}
\newcommand{\diag}[1]{\mathrm{diag}\left( #1 \right)}
\newcommand{\C}{\mathbb{C}}
\newcommand{\R}{\mathbb{R}}
\newcommand{\pvp}{\vec{p}{\kern 0.45mm}'}
\let\oldnabla\nabla
\renewcommand{\nabla}{\oldnabla\!}
\newcommand{\underflow}[2]{\underset{\kern-60mm \overbrace{#1} \kern-60mm}{#2}}
\long\def\ignore#1{}
\newtheorem{theorem}{Theorem}
\newtheorem{corollary}[theorem]{Corollary}%[theorem]
\newtheorem{lemma}[theorem]{Lemma}
\newtheorem*{claim*}{Claim}
\newtheorem{remark}[theorem]{Remark}
\title{	
	Explicit Quantum Circuits for Simulating Linear Differential Equations via Dilation
}
\author{
	Seonggeun Park\thanks{Department of Electrical Engineering, Korea University, South Korea. \texttt{ssgg0926@korea.ac.kr} }
}
\date{October 3, 2025\vspace{-5mm}}
\begin{document}
	
\maketitle
 
\begin{abstract}
Quantum simulation has primarily focused on unitary dynamics, while many physical and engineering systems can be modeled by linear ordinary differential equations whose generators include non-Hermitian terms. Recent studies have shown that such equations, which give rise to nonunitary dynamics, can be embedded into a larger unitary framework via dilation techniques. However, their concrete realization on quantum circuits remains underexplored.

In this paper we present a concrete pipeline that connects the dilation formalism with explicit quantum circuit constructions. On the analytical side, building on the recent dilation framework, we introduce a discretization of the continuous dilation operator that is tailored for quantum implementation. This construction ensures an exactly skew-Hermitian ancillary generator, which allows the moment conditions to be satisfied without imposing artificial constraints. We prove that the resulting scheme achieves a global error bound of order $O(M^{-3/2})$, up to exponentially small boundary effects. This error can be suppressed by refining the discretization, where $M$ denotes the discretization parameter.

On the algorithmic side, we demonstrate that the dilation triple $(F_h, |r_h\rangle, \langle l_h|)$ can be efficiently implemented on quantum circuits. Using linear combinations of unitaries, QFT-adder operators, and quantum singular value transformation, the framework requires resources ranging from $O(\log M)$ to $O((\log M)^2)$, depending on the stage of the pipeline.
\end{abstract}

\section{Introduction}

Many physical and engineering systems can be modeled by linear differential equations whose generators are not necessarily Hermitian, for example due to dissipation, relaxation, or coupling to an external environment. Simulating such equations is relevant for understanding phenomena such as open quantum systems, transport processes, and dissipative dynamics. We focus on simulating homogeneous linear differential equations of the form
\begin{equation}
    \dot{\mathbf{x}}(t) = A(t)\mathbf{x}(t),
    \label{eq:ODE}
\end{equation}
where $A(t)\in\C^{N\times N}$ and $\mathbf{x}(0)=\mathbf{x}_0\in\C^N$. The generator can always be decomposed as
\begin{equation}
    A(t) = -iH(t)+K(t),
    \label{eq:A-decomposition}
\end{equation}
with Hermitian operators $H$ and $K$. When $K=0$, the setting reduces to the Hamiltonian simulation problem. The more general case $K\neq 0$ corresponds to non-Hermitian generators. Equation~\eqref{eq:ODE} already covers many important models; in particular, after spatial discretization\footnote{Spatial discretization refers to approximating spatial derivatives by finite differences or related numerical schemes, resulting in a matrix $A(t)$ that encodes the corresponding spatial operators.
}, a broad range of PDEs can be reformulated as large-scale ODE systems of this form.

In practical applications, the dimension $N$ can be extremely large, especially when $A(t)$ arises from the discretization of partial differential equations in multiple spatial dimensions. This leads to substantial computational challenges for classical algorithms in simulating such large differential equation systems.

Quantum computers offer a potential route to overcome these limitations. Over the past decade, there has been remarkable progress in Hamiltonian simulation algorithms~\cite{berry2007efficient,10.5555/2231036.2231040,10.5555/2481569.2481570,10.5555/2600498.2600499,berry2014exponential,berry2015simulating,berry2015hamiltonian,low2017optimal,childs2018toward,childs2019faster,low2019hamiltonian}, achieving asymptotically optimal complexity in simulation time and precision. In contrast, simulating linear dynamics with non-Hermitian generators, which correspond to nonunitary evolution, remains more challenging. A quantum device can, in principle, address this issue by embedding the evolution into a larger Hilbert space where the dynamics become unitary, a technique referred to as \emph{dilation}.

Two major dilation approaches have emerged, Schrödingerization~\cite{jin2023quantum,jin2024quantum} and the Linear Combination of Hamiltonian Simulation (LCHS)~\cite{an2023linear,an2023quantum}. Both provide exact embeddings of nonunitary dynamics into unitary dynamics. More recently, a general dilation framework was proposed~\cite{li2025linear}. This framework unifies Schrödingerization and LCHS, and further introduces new families of dilations through integral kernels, difference operators, and a pseudodifferential generator. Despite these advances, prior works have focused primarily on the mathematical structure of dilation and the analysis of discretization errors. In contrast, the concrete realization on quantum circuits, including state preparation, block-encoding of dilated Hamiltonians, and evaluation strategies, remains comparatively underexplored. Consequently, the connection between mathematical dilation methods and practical quantum algorithms has not yet been fully established.

In this work, we aim to bridge this gap. Building on the dilation framework of~\cite{li2025linear}, we adapt the Summation-by-Parts (SBP) discretization to define a skew-Hermitian operator $F_h$ that preserves the correct algebraic structure while being well-suited for block encoding on quantum hardware. Based on this operator, we provide an error analysis and establish a global error bound. We further present explicit circuit constructions, including preparation of the ancillary state $\ket{r_h}$ via Quantum Singular Value Transformation (QSVT)~\cite{gilyen2019quantum,martyn2021grand}, block encoding of the dilated Hamiltonian, and an evaluation strategy.

In summary, the main contribution of this paper is to integrate these components into a unified framework for simulating linear ordinary differential equations with non-Hermitian generators. Section~\ref{sec:math-framework} introduces the mathematical formulation of the dilation technique and the SBP-based operator $F_h$. Section~\ref{sec:error-analysis} develops the error analysis and establishes the global bound. Section~\ref{sec:circuit} presents the quantum circuit implementations. This pipeline connects abstract dilation methods to practical quantum algorithms, with potential applications to physically relevant PDE systems such as viscoelastic wave and heat equations.

\section{Mathematical Framework}
\label{sec:math-framework}

\subsection*{Notation and conventions}

For a function $u=u(t,p)$ we use the shorthand
\[
u_t := \partial_t u, \qquad u_p := \partial_p u,
\]
so that subscripts on functions denote partial derivatives. 

Boldface letters such as $\uu=(u_0,\dots,u_M)^\top \in \C^{M+1}$ denote vectors, with subscripts indicating components. The individual components $u_i$ are written in normal (non-bold) font. When a quantity depends on time, we use a dot to denote its time derivative, e.g.\ $\dot{\uu}(t):=\partial_t \uu(t)$ for a vector and $\dot{u}_i(t):=\partial_t u_i(t)$ for its components. We denote by $\mathbf{e}_M\in\C^{M+1}$ the $M$-th standard basis vector, whose last component is $1$ and all others are $0$.

An $n$-bit Toffoli gate refers to the multi-controlled NOT gate with $n$ control qubits and one target qubit, hence acting on a total of $n{+}1$ qubits. When we say ``Toffoli gate'' without qualification, we mean the standard 2-bit Toffoli gate with two controls and one target.

In this paper, we only consider the case where $K(t)$ is negative semidefinite.

\subsection{Dilation framework and moment conditions}\label{subsec:dilation}

In~\cite{li2025linear}, a general framework for embedding linear dynamics with non-Hermitian generators into unitary evolution is formalized by the following theorem.

\begin{theorem}[Moment conditions for exact dilation, {\cite[Theorem~1]{li2025linear}}]
Let $F$ be a linear operator acting on the ancillary Hilbert space $\mathcal{H}_A$, and let $|r) \in \mathbb{X}$ together with a linear functional $(l|$ on $\mathbb{X}$, where $\mathcal{H}_A \subset \mathbb{X}$. Let $H(s)$ and $K(s)$ be Hermitian operators on the system Hilbert space $\mathcal{H}_S$, defined for $s\in[0,T]$, where $T>0$ denotes the final time. If the following moment conditions are satisfied,
\begin{equation}
    (l|F^k|r) = 1, \qquad \forall k \ge 0,
    \label{eq:moment-condition}
\end{equation}
then the dilated evolution reproduces the exact solution of the target dynamics:
\begin{equation}
    \big((l|\otimes I\big)\,\mathcal{T}\exp\!\left(-i \int_0^t \big(I_A \otimes H(s) + i F \otimes K(s)\big)\, ds\right) \big(|r)\otimes I\big) = \mathcal{T}\exp\!\left(\int_0^t A(s)\, ds\right),
    \label{eq:dilation-method}
\end{equation}
where $A(s) = -iH(s) + K(s)$.
\end{theorem}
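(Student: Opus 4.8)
The plan is to expand both sides of \eqref{eq:dilation-method} as time-ordered Dyson series in the generators and match them term by term, using the moment conditions \eqref{eq:moment-condition} to collapse the ancillary factors. First I would write the left-hand side propagator as
\[
\mathcal{T}\exp\!\left(-i\int_0^t \big(I_A\otimes H(s) + iF\otimes K(s)\big)ds\right)
= \sum_{n\ge 0} (-i)^n \int_{0\le s_n\le\cdots\le s_1\le t} \prod_{j=1}^n \big(I_A\otimes H(s_j)+iF\otimes K(s_j)\big)\, ds,
\]
and expand each factor $I_A\otimes H(s_j)+iF\otimes K(s_j)$, so that a generic term in the $n$-th order is a sum over binary choices: at each slot $j$ we pick either $I_A\otimes(-iH(s_j))$ or $F\otimes K(s_j)$. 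Collecting the ancillary operators, a term with the pattern determined by a subset picks up an ancillary factor that is a product of $F$'s and $I_A$'s; since the $I_A$'s contribute trivially, this product equals $F^{k}$ where $k$ is the number of slots where $K$ was chosen. Sandwiching between $(l|$ and $|r)$ and invoking \eqref{eq:moment-condition} replaces every such ancillary factor by the scalar $1$, uniformly in $k$ and in the positions of the $K$-slots.

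The key step is then to observe that, after this collapse, the system-space operator left in each term is exactly the product $\prod_{j=1}^n B(s_j)$ where $B(s_j)\in\{-iH(s_j),\,K(s_j)\}$ according to the same binary pattern, time-ordered as $s_1\ge\cdots\ge s_n$. Summing over all $2^n$ patterns at order $n$ reconstructs $\prod_{j=1}^n\big(-iH(s_j)+K(s_j)\big)=\prod_{j=1}^n A(s_j)$, and summing over $n$ with the nested time integrals gives precisely the Dyson series for $\mathcal{T}\exp(\int_0^t A(s)\,ds)$, which is the right-hand side. I would also note that the $i$ in $iF\otimes K$ and the overall $-i$ combine correctly: the $K$-slot contributes $(-i)\cdot(iF\otimes K) = F\otimes K$, matching the $+K$ in $A=-iH+K$, while the $H$-slot contributes $(-i)(I_A\otimes H)=I_A\otimes(-iH)$, matching $-iH$.

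The main obstacle is analytic rather than algebraic: one must justify the term-by-term manipulation of the series — absolute/uniform convergence of the Dyson expansions, the interchange of the infinite sum with the collapse $(l|F^k|r)\mapsto 1$, and the fact that $(l|$ is applied to elements that genuinely lie in (or extend continuously to) $\mathbb{X}$ so that the functional is defined on every partial sum. If $F$ is bounded on $\mathcal{H}_A$ and $(l|$ is a bounded functional on $\mathbb{X}$ this is routine dominated-convergence bookkeeping; in the unbounded or merely densely-defined setting one would restrict to a suitable core, establish the identity there, and extend by continuity, or alternatively differentiate both sides in $t$ and appeal to uniqueness of solutions of the resulting linear ODE with the common initial value $I$ at $t=0$. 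I expect the cleanest writeup to combine the Dyson-series matching above with this ODE/uniqueness argument to sidestep delicate convergence issues, citing the hypotheses of \cite{li2025linear} under which $F$, $|r)$, $(l|$ are well-behaved.
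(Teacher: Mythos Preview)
The paper does not supply its own proof of this theorem: it is stated verbatim as a citation of \cite[Theorem~1]{li2025linear} and used as a black box, so there is no in-paper argument to compare your proposal against. Your Dyson-series matching is the standard way to prove such a moment-condition identity and is algebraically correct; in particular your sign bookkeeping $(-i)(iF\otimes K)=F\otimes K$ and $(-i)(I_A\otimes H)=I_A\otimes(-iH)$ is right, and the collapse $(l|F^k|r)=1$ indeed reduces the expanded left side term-by-term to the Dyson series of $\mathcal{T}\exp(\int_0^t A(s)\,ds)$. Your caveats about convergence and about $(l|$ being defined on $\mathbb{X}\supset\mathcal{H}_A$ are also to the point, and the alternative of differentiating both sides and invoking ODE uniqueness is a clean way to avoid those issues. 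If you want to see the original argument you would need to consult \cite{li2025linear} directly.
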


This result states that if a triple $(F,|r),(l|)$ satisfies the moment condition~\eqref{eq:moment-condition}, then the dilated unitary evolution exactly recovers the physical solution of the linear ODE~\eqref{eq:ODE}. Different choices of the triple $(F,|r),(l|)$ correspond to different dilation methods; in particular, Schrödingerization and LCHS can be viewed as special cases within this general framework.

\begin{remark}[On the notation $(l|,|r)$]
The symbols $(l|$ and $|r)$ are written in round brackets to emphasize that they do not belong to the Hilbert space $\mathcal{H}_A$. Here $|r)$ belongs to a larger space $\mathbb{X}\supset\mathcal{H}_A$, while $(l|$ is a linear functional on $\mathbb{X}$.
\end{remark}

The need for such a dilation arises from the impossibility of performing direct Hamiltonian simulation within the system space $\mathcal{H}_S$, since $H(s) + iK(s)$ is not Hermitian. By introducing an ancillary space and a skew-Hermitian operator $F$, one can lift the dynamics to the enlarged space $\mathcal{H}_A \otimes \mathcal{H}_S$ with Hamiltonian
\[
I_A \otimes H(s) + i F \otimes K(s),
\]
which is Hermitian.

It is crucial that $|r)\in\mathbb{X}$ and $(l|$ acts as a functional on $\mathbb{X}$, not merely on $\mathcal{H}_A$. Otherwise, if both $|r)$ and $(l|$ were restricted to $\mathcal{H}_A$, then the left-hand side of the equation~\eqref{eq:dilation-method} would represent a bounded unitary evolution, while the right-hand side may not be bounded, leading to a contradiction.

In~\cite{li2025linear}, the ancillary Hilbert space was chosen as
\begin{equation}
\mathcal{H}_A := H^1_0(0,1)
= \big\{ f \in L^2(0,1) \mid f' \in L^2(0,1),\, f(1)=0 \big\}, 
\quad
\langle f,g \rangle = \int_0^1 f(p)^{\ast} g(p)\, dp.
\label{eq:HA}
\end{equation}
That is, $\mathcal{H}_A$ consists of functions on $[0,1]$ that are square-integrable together with their derivatives, and that satisfy the boundary condition $f(1)=0$. Within this space, the ancillary operator was defined as
\begin{equation}
F_\theta = \theta F, 
\qquad F := p \partial_p + \tfrac{1}{2}.
\label{eq:Ftheta}
\end{equation}
Finally, the ancillary initial state $|r)$ and evaluation functional $(l|$ were specified by
\begin{equation}
|r) = p^{\beta}, 
\quad \beta = \tfrac{1}{\theta}-\tfrac{1}{2}, 
\qquad
(l|f = 2^{\beta} f\!\left(\tfrac{1}{2}\right).
\label{eq:rstate}
\end{equation}

\begin{lemma}[{\cite[Lemma~1]{li2025linear}}]
For the above choice of triple $(F_\theta,|r),(l|)$, the operator $F$ is skew-Hermitian on $\mathcal{H}_A$, and the moment condition \eqref{eq:moment-condition} is fulfilled.
\end{lemma}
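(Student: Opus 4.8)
The plan is to verify the two assertions by direct computation, handling skew-Hermiticity via integration by parts and the moment condition via an eigenfunction observation.

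For skew-Hermiticity, I would take arbitrary $f,g\in\mathcal{H}_A=H^1_0(0,1)$ and expand $\langle f,Fg\rangle = \int_0^1 \overline{f(p)}\,\bigl(p\,g'(p)+\tfrac12 g(p)\bigr)\,dp$. Integrating the term $\int_0^1 \overline{f}\,p\,g'\,dp$ by parts moves the derivative onto $p\overline{f}$ and produces the boundary term $\bigl[\,p\,\overline{f(p)}\,g(p)\,\bigr]_0^1$. This term vanishes: at $p=1$ because both $f(1)=0$ and $g(1)=0$ by the definition of $H^1_0$, and at $p=0$ because the prefactor $p$ annihilates the (finite) boundary values — here one uses the one-dimensional Sobolev embedding $H^1(0,1)\hookrightarrow C[0,1]$, which also guarantees $Fg\in L^2(0,1)$ so that the pairings are well defined. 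Collecting the remaining integrals yields $\langle f,Fg\rangle = -\int_0^1\bigl(p\,\overline{f'(p)}+\tfrac12\overline{f(p)}\bigr)\,g(p)\,dp = -\langle Ff,g\rangle$, i.e.\ $F^\dagger=-F$ on $\mathcal{H}_A$; since $\theta\in\R$, the same holds for $F_\theta=\theta F$.

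For the moment condition, the key point is that $|r)=p^\beta$ is an eigenfunction of $F$. Indeed $F p^\beta = p\cdot\beta p^{\beta-1}+\tfrac12 p^\beta = (\beta+\tfrac12)\,p^\beta$, and substituting $\beta=\tfrac1\theta-\tfrac12$ gives eigenvalue $\tfrac1\theta$, so $F_\theta|r)=\theta F|r)=|r)$. By induction $F_\theta^{\,k}|r)=|r)$ for all $k\ge0$ (the case $k=0$ being the identity). Applying the functional, $(l|F_\theta^{\,k}|r)=(l|p^\beta = 2^\beta\cdot(\tfrac12)^\beta = 1$, which is exactly~\eqref{eq:moment-condition}.

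The computations are short, so the only real care needed is the functional-analytic bookkeeping, which I expect to be the main (if mild) obstacle. Note that $p^\beta$ does not itself belong to $\mathcal{H}_A$ — already $p^\beta|_{p=1}=1\ne0$ violates the boundary condition — which is precisely why $|r)$ must be taken in the larger space $\mathbb{X}\supsetneq\mathcal{H}_A$ and why $F$ in the moment condition is read as a differential operator on $\mathbb{X}$ rather than as the skew-Hermitian operator on $\mathcal{H}_A$; one should check that the chosen $\mathbb{X}$ makes $p^\beta$, its images under $F_\theta$, and the point evaluation $(l|f = 2^\beta f(\tfrac12)$ all meaningful. One should also record the mild parameter constraint $\theta>0$ (so that $\beta>-\tfrac12$ and $p^\beta\in L^2(0,1)$), under which every step above is valid.
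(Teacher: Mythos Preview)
Your proposal is correct and follows essentially the same approach as the paper's proof sketch: integration by parts with the boundary condition $f(1)=g(1)=0$ for skew-Hermiticity, and the eigenfunction identity $F_\theta|r)=|r)$ together with $(l|r)=1$ for the moment condition. Your version is simply more detailed, adding the Sobolev-embedding justification at $p=0$ and the remark that $|r)\notin\mathcal{H}_A$, both of which the paper either leaves implicit or mentions separately after the proof.
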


\begin{proof}[Sketch of proof]
For $f,g \in \mathcal{H}_A$,
\[
\langle f, F g \rangle
= \int_0^1 f(p)^{\ast} \Big(p \tfrac{d}{dp} g(p) + \tfrac{1}{2} g(p)\Big)\, dp.
\]
Integration by parts, together with the boundary condition $f(1)=g(1)=0$, yields $\langle f, F g \rangle = - \langle F f, g \rangle$, establishing skew-Hermiticity. Moreover, $F_\theta |r) = |r)$ and $(l|r)=1$, which confirms the moment condition.
\end{proof}

A skew-Hermitian operator has purely imaginary eigenvalues when restricted to the Hilbert space $\mathcal{H}_A$. Thus, notice that the identity $F_\theta|r)=|r)$ holds because $|r)=p^{\beta}$ does not belong to $\mathcal{H}_A$. This interplay between $\mathcal{H}_A$ and its embedding space $\mathbb{X}$ is crucial for the moment conditions to hold. This establishes the continuous dilation framework and clarifies the role of the moment conditions.

\subsection{Discretization of the triple \texorpdfstring{$(F_\theta,|r),(l|)$}{(F,|r),(l|}}\label{subsec:discretization}

While the triple $(F_\theta,|r),(l|)$ in Sec.~\ref{subsec:dilation} is mathematically elegant, it relies on the ancillary Hilbert space $\mathcal{H}_A=H^1_0(0,1)$ and a carefully chosen triple $(F_\theta,|r),(l|)$ such that the moment conditions are exactly satisfied. In this setting, $F$ is skew-Hermitian only when restricted to $\mathcal{H}_A$, and the ancillary initial state $|r)$ lies outside $\mathcal{H}_A$ but within the larger embedding space $\mathbb{X}$. This distinction is essential for the validity of the theorem, but it creates a mismatch when considering implementation on a quantum computer.

Indeed, a gate-based quantum device naturally operates on a finite-dimensional Hilbert space of qubit registers with the standard $\ell^2$ inner product and cannot enforce a restriction such as $f(1)=0$ at the boundary. In~\cite{li2025linear}, discrete approximations compatible with the standard $\ell^2$ inner product were proposed. It was shown that, when restricted to the subspace $\{\uu\in\C^{M+1}\mid u_M=0\}$, a suitably rescaled discrete operator becomes strictly skew-Hermitian. However, this approach requires an explicit boundary constraint as well as a consistent rescaling of the initial state $|r)$ and the evaluation functional $(l|$. This complicates state preparation and undermines the simplicity of the dilation when realized on quantum hardware.

To overcome these limitations, we design a new discretization $F_h$ of the operator $F=p\partial_p+\tfrac{1}{2}$ that is skew-Hermitian on the finite-dimensional $\ell^2$ space without boundary restriction. This eliminates the need for boundary conditions or state rescaling and makes the formulation directly compatible with quantum hardware.

We now describe the construction in detail. We discretize the ancillary coordinate $p\in[0,1]$ into $M$ subintervals with mesh size $h=1/M$ and grid nodes $p_j=jh$ for $j=0,1,\dots,M$. We present the discrete triple
\[
\ket{r_h}\in\C^{M+1},\qquad \bra{l_h}:\C^{M+1}\to\C,\qquad \theta F_h\in\C^{(M+1)\times (M+1)},
\]
which plays the role of the continuous triple $(F_\theta,|r),(l|)$ in \eqref{eq:Ftheta}--\eqref{eq:rstate}.

We begin with $\ket{r_h}$. Discretizing $|r)=p^\beta$ with $\beta=\tfrac{1}{\theta}-\tfrac{1}{2}$ yields
\begin{equation}
  \ket{r_h}
  := C_{M,\theta}^{-1}\sum_{j=0}^{M} p_j^{\beta}\ket{j}, 
  \qquad
  C_{M,\theta}=\Big(\sum_{j=0}^{M} p_j^{2\beta}\Big)^{1/2}.
  \label{eq:rh}
\end{equation}
A bound on the normalization constant is given below.

\begin{lemma}[Bound for $C_{M,\theta}$]\label{lem:CM-bounds}
For $\beta \ge 0$,
\begin{equation}
  \frac{M}{2\beta+1}\;\le\; C_{M,\theta}^2 \;\le\; \frac{M}{2\beta+1} \;+\; 1.
  \label{eq:CM-bound}
\end{equation}
\end{lemma}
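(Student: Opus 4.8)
The idea is to reduce the claim to an elementary Riemann-sum comparison. Writing $p_j = j/M$, the $j=0$ term contributes nothing when $\beta>0$, so
\[
C_{M,\theta}^2 \;=\; \sum_{j=0}^{M} \Big(\tfrac{j}{M}\Big)^{2\beta} \;=\; M^{-2\beta}\sum_{j=1}^{M} j^{2\beta},
\]
and the whole statement is equivalent to the two-sided estimate
\[
\frac{M^{2\beta+1}}{2\beta+1} \;\le\; \sum_{j=1}^{M} j^{2\beta} \;\le\; \frac{M^{2\beta+1}}{2\beta+1} + M^{2\beta},
\]
obtained after multiplying through by $M^{2\beta}$. (The degenerate case $\beta=0$ must be noted separately: there $C_{M,\theta}^2 = M+1$, which lies in $[M,\,M+1]$, matching the bound with equality at the top.)

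For the body of the argument I would invoke monotonicity of $t\mapsto t^{2\beta}$ on $[0,\infty)$ for $\beta\ge 0$, which gives, for each integer $j\ge 1$,
\[
\int_{j-1}^{j} t^{2\beta}\,dt \;\le\; j^{2\beta} \;\le\; \int_{j}^{j+1} t^{2\beta}\,dt .
\]
Summing the left inequality over $j=1,\dots,M$ telescopes the integrals into $\int_0^M t^{2\beta}\,dt = M^{2\beta+1}/(2\beta+1)$, which is exactly the lower bound. For the upper bound I would not sum the right inequality directly (that would produce $(M+1)^{2\beta+1}$, the wrong form); instead I peel off the last term, $\sum_{j=1}^{M} j^{2\beta} = M^{2\beta} + \sum_{j=1}^{M-1} j^{2\beta}$, and bound the remaining sum by $\sum_{j=1}^{M-1} j^{2\beta} \le \int_1^M t^{2\beta}\,dt \le \int_0^M t^{2\beta}\,dt = M^{2\beta+1}/(2\beta+1)$, again using the right-hand integral comparison. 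Adding back the peeled term $M^{2\beta}$ gives the claimed upper bound, and dividing by $M^{2\beta}$ yields \eqref{eq:CM-bound}.

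There is essentially no serious obstacle here; the only points requiring care are the separate handling of the vanishing $j=0$ contribution (and the $\beta=0$ edge case), and the asymmetry in how the two bounds are derived — the upper bound needs the "peeling" step rather than a naive termwise integral comparison, otherwise one overshoots the additive constant $1$. Everything else is a direct monotonicity-plus-telescoping computation.
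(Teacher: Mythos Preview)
Your proposal is correct and follows essentially the same approach as the paper: both reduce to the integral comparison $\int_0^M x^{2\beta}\,dx \le \sum_{j=1}^M j^{2\beta} \le \int_0^M x^{2\beta}\,dx + M^{2\beta}$ via monotonicity of $x^{2\beta}$, with the upper bound obtained by peeling off the top term $M^{2\beta}$ before applying the termwise integral bound. The paper's proof sketch states this more tersely, but your explicit handling of the peeling step and the $\beta=0$ edge case makes the argument clearer.
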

\begin{proof}[Proof sketch]
Let $\alpha:=2\beta \ge 0$. Since $x^\alpha$ is increasing for $\alpha\ge0$,
\[
\sum_{j=1}^M j^\alpha \le \int_0^M x^\alpha dx + M^\alpha
= \frac{M^{\alpha+1}}{\alpha+1}+M^\alpha,
\]
and division by $M^\alpha$ gives the upper bound.
The lower bound follows from $\sum_{j=1}^M j^\alpha \ge \int_0^M x^\alpha dx$.
\end{proof}

Next, we turn to the evaluation functional $\bra{l_h}$. Let
\[
\Imid := \{x\in\{0,1,\dots,M\}\mid M/4 \le x \le 3M/4\}.
\]
We define $\bra{l_h}$ so that it corresponds to postselection on ancilla outcomes $x\in\Imid$:
\begin{equation}
  \bra{l_h} := C_{M,\theta}\Big(\frac{M}{x}\Big)^{\beta}\bra{x},
  \qquad x\in \Imid.
  \label{eq:lh}
\end{equation}
It immediately follows that $\langle l_h\vert r_h\rangle = 1$ for all $x\in\Imid$.

The operator $F$ is discretized based on the SBP framework. Note that $F=\tfrac12\{\partial_p,p\}$. Let $P=\diag{0,h,2h,\dots,1}$ and adopt the second-order SBP difference operator $D\in\C^{(M+1)\times(M+1)}$. For vectors $\uu,\vv\in\C^{M+1}$ with
$\uu=(u_0,u_1,\dots,u_M)^\top$, $\vv=(v_0,v_1,\dots,v_M)^\top$, set
\begin{equation}
D=\frac{1}{h}\begin{bmatrix}
-1 & 1 & & & \\
-\tfrac12 & 0 & \tfrac12 & & \\
& \ddots & \ddots & \ddots & \\
& & -\tfrac12 & 0 & \tfrac12 \\
& & & -1 & 1
\end{bmatrix},\qquad
H=h\cdot\diag{\tfrac12,1,\dots,1,\tfrac12},
\label{eq:SBP-DH}
\end{equation}
which satisfies the SBP identity:
\begin{equation}
  \langle \uu,D\vv\rangle_H+\langle D\uu,\vv\rangle_H
  = \uu^\dagger B \vv,\quad
  \langle \uu,\vv\rangle_H:=\uu^\dagger H \vv,\quad
  B:=\diag{-1,0,\dots,0,1}.
  \label{eq:SBP-identity}
\end{equation}
We then form the split operator
\begin{equation}
  G_h:=\tfrac12(PD+DP),
  \label{eq:Gh-def}
\end{equation}
which is skew-Hermitian in the $H$-inner product only on the subspace $\{\uu\in\C^{M+1}:u_M=0\}$.

The next step is to construct an operator that is skew-Hermitian in the $H$-inner product, but without imposing the constraint $u_M=0$. Let $-\kappa(t)$ be regarded as the spectrum of the negative semidefinite $K(t)$. Then, the ancilla dynamics can be represented as
\begin{equation}
  \dot\uu(t) = -\theta\,\kappa(t)\, G_h\,\uu(t),\qquad \kappa(t)>0.
  \label{eq:ancilla-ode-nosat}
\end{equation}
To remove the boundary constraint while preserving skewness, we introduce a Simultaneous Approximation Term (SAT) at the right boundary~\cite{carpenter1994time,fernandez2014review}:
\begin{equation}
  \dot \uu(t)= -\theta\,\kappa(t)\, G_h \uu(t) + \tau_M\, H^{-1}\big(u_M(t)-g_R(t)\big)\,\mathbf{e}_M,
  \label{eq:semi-discrete-SAT}
\end{equation}
where $\mathbf{e}_M=(0,0,\dots,0,1)^\top\in\mathbb{C}^{M+1}$ denotes the $M$-th standard basis vector. Define the energy $E(t):=\uu(t)^\dagger H\,\uu(t)$. Differentiating and applying the SBP identity yields
\begin{equation}
  \frac{d}{dt}\big(\uu^\dagger H\,\uu\big)
  = \theta\,\kappa(t)\,|u_M(t)|^2 + 2\tau_M\,\Re\big(u_M^\ast(t)\,(u_M(t)-g_R(t))\big).
  \label{eq:energy-balance}
\end{equation}
Choosing
\begin{equation}
  \tau_M=\frac{\theta\,\kappa(t)}{2}
  \label{eq:tau-choice}
\end{equation}
cancels the $|u_M|^2$ term and gives $\tfrac{d}{dt}(\uu^\dagger H\uu)=-\theta\,\kappa(t)\,\Re(u_M^\ast g_R)$, which shows conservation in the $H$-inner product when $g_R= 0$. Thus, \eqref{eq:semi-discrete-SAT} can be equivalently written as
\begin{equation}
  \dot \uu(t) = -\theta\,\kappa(t)\,\widetilde G_h\,\uu(t),\qquad
  \widetilde G_h:= G_h - \tfrac12 H^{-1}\mathbf{e}_M \mathbf{e}_M^\top,
  \label{eq:Gh-tilde}
\end{equation}
which is $H$-skew on all of $\C^{M+1}$.

Finally, since quantum hardware natively operates with the standard $\ell^2$ inner product, it is necessary to transform an operator that is skew-Hermitian with respect to the $H$-inner product into one that is skew-Hermitian in the $\ell^2$ inner product. This is achieved by applying a similarity transformation with $H^{1/2}$:
\begin{equation}
  \widetilde F_h := H^{1/2}\,\widetilde G_h\,H^{-1/2},
  \label{eq:Fh-tilde}
\end{equation}
which is exactly skew-Hermitian in $\ell^2$ and remains tridiagonal. For implementation, we adopt the simplified stencil
\begin{equation}
F_h=\frac{1}{4}\begin{bmatrix}
0 & 1 & 0 & 0 &  & 0 & 0 & 0 \\
-1 & 0 & 3 & 0 & \cdots & 0 & 0 & 0 \\
0 & -3 & 0 & 5 &  & 0 & 0 & 0 \\
  & \vdots & \vdots &  & \ddots &  & \vdots & \\
0 & 0 & 0 & 0 &\cdots & -(2M-3) & 0 & (2M-1) \\
0 & 0 & 0 & 0 & \cdots & 0 & -(2M-1) & 0
\end{bmatrix}.
\label{eq:Fh}
\end{equation}
Here $\Delta_h := F_h - \widetilde F_h$ is skew-Hermitian and supported only on the two $2\times2$ corners. Accordingly, $\widetilde F_h$ is replaced by $F_h$, which incurs only a negligible error. This discrepancy is part of the boundary mismatch error, whose impact will be analyzed in Section~\ref{sec:error-analysis}. Moreover, $F_h$ is particularly well suited for block-encoding, as will be demonstrated in Section~\ref{sec:block-encode-Fh-and-assemble}.

%%%%%%%%%%%%%%%%%%%%%%%%%%%%%%%%%%%%%%

\section{Error Analysis and Global Bound}
\label{sec:error-analysis}

In this section, we quantify the error incurred by replacing the continuous dilation generator 
$F=\tfrac12\{\partial_p,p\}=p\partial_p+\tfrac12$ with $F_h$, together with the boundary modifications used for block-encoding. 
Throughout, let $g(p)=p^\beta$ with $\beta=\tfrac{1}{\theta}-\tfrac{1}{2}$ and
\[
\bm{g}=(g(p_0),\dots,g(p_M))^\top\in\C^{M+1},\qquad
p_j=jh,\quad h=\tfrac{1}{M}.
\]
For the discrete evaluation we recall
\[
\ket{r_h}=C_{M,\theta}^{-1}\sum_{j=0}^{M}p_j^\beta\ket{j},
\qquad 
\bra{l_h}=C_{M,\theta}\Big(\tfrac{M}{x}\Big)^\beta\bra{x},
\qquad x\in\Imid,
\]
so that $\langle l_h\vert r_h\rangle=1$. We assume $K(t)\preceq 0$ and $\|K(t)\|\le K_{\max}$ on $t\in[0,T]$.

\subsection{Sources of error and the main theorem}

The error relative to the continuous, exact dilation arises from two main sources. First, the SBP-based interior discretization of the dilation generator $F$ incurs a consistency error of order $O(h^2)$ in the interior. The split form $G_h=\tfrac12(PD+DP)$ provides a second-order accurate approximation in the interior, but it cannot perfectly reproduce the action of the continuous operator. Second, the operator $G_h$ is skew-Hermitian only on the restricted subspace $\{u \in \mathbb{C}^{M+1} \mid u_M=0\}$. To extend skew-Hermiticity to the full space $\mathbb{C}^{M+1}$, we add a Simultaneous Approximation Term (SAT) at the right boundary, which cancels the boundary leakage and yields the modified operator $\widetilde G_h$. This boundary modification ensures skew-Hermiticity on the entire space, while introducing an additional boundary error. In the following, we establish a global bound by combining these two contributions into a unified error analysis.

\begin{theorem}[Global error at mid-indices]
\label{thm:main}
Assume $0<\theta\le 2/7$, $\beta=\tfrac{1}{\theta}-\tfrac{1}{2}\ge 3$, and $\theta K_{\max}T\le\tfrac{1}{8e}$. 
Fix $x\in\Imid$. Then, for all unit $\ket{x_0}\in\C^N$,
\begin{align}
\Big\|
\mathcal{T}e^{\int_0^T A(s)\,ds}\ket{x_0}
-
\big(\bra{l_h}\otimes I\big)U_E(T,0)\big(\ket{r_h}\otimes\ket{x_0}\big)
\Big\|
&\le 
\Big(\tfrac{M}{x}\Big)^{\beta}\,
\big|u(T,p_x)-u_x^d(T)\big|,
\label{eq:final-topline}
\\[-1pt]
&\le 4^{\beta}\left(
C(\theta)\,h^{3/2}
+\Big(2+\tfrac{M(1+C(\theta)h^{3/2})}{8e}\Big)2^{-M/4}
\right),
\label{eq:main-bound}
\end{align}
where $U_E(T,0)$ denotes the unitary operator of the dilated evolution
\[
U_E(T,0) = \mathcal{T}\exp\!\left(-i \int_0^T \big(I\otimes H(s) + i \theta F_h\otimes K(s)\big)\, ds\right),
\]
the \emph{ideal} ancilla profile $u$ solves
\begin{align}
u_t(t,p)=-\theta\kappa(t)\,F u(t,p),\qquad u(0,p)=g(p),
\label{eq:ideal-pde}
\end{align}
and the implemented discrete profile solves
\begin{align}
\dot{\bm{u}}^d(t)=-\theta\kappa(t)\,F_h\,\bm{u}^d(t),\qquad \bm{u}^d(0)=\bm{g}.
\label{eq:ud-ode}
\end{align}
Here 
\(
C(\theta):=\frac{\theta}{12}\,\beta(\beta-1)(2\beta-1).
\)
Consequently,
\[
\Big\|
\mathcal{T}e^{\int_0^T A(s)\,ds}\ket{x_0}
-
\big(\bra{l_h}\otimes I\big)U_E(T,0)\big(\ket{r_h}\otimes\ket{x_0}\big)
\Big\|
=O\!\Big(M^{-3/2}+M\,2^{-M/4}\Big).
\]
\end{theorem}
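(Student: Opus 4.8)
The plan is to prove the two inequalities \eqref{eq:final-topline} and \eqref{eq:main-bound} and then read off the asymptotic conclusion. The first inequality is essentially a bookkeeping step: by the exact dilation theorem (Theorem~1 of \cite{li2025linear}), the continuous triple $(F_\theta,|r),(l|)$ reproduces $\mathcal{T}e^{\int_0^T A(s)ds}$ exactly, so the left-hand side equals the discrepancy between the \emph{ideal} dilated evolution and the \emph{implemented} one. Factoring the system register out, one sees that at a fixed mid-index $x$ the whole error is governed by $|u(T,p_x)-u^d_x(T)|$, scaled by the rescaling factor $(M/x)^\beta$ from $\bra{l_h}$; since $x\in\Imid$ we have $M/x\le 4$, which turns $(M/x)^\beta$ into the prefactor $4^\beta$ in \eqref{eq:main-bound}. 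Here one should be careful that the system Hamiltonian $H(s)$ commutes through the tensor structure so that the only thing that differs between ideal and implemented evolutions is the ancilla generator $F$ versus $F_h$, with the common time-dependent scalar $\theta\kappa(t)$ (where $\kappa(t)$ ranges over the spectrum of $-K(t)$); the uniformity over the spectrum of $K$ is what lets the scalar ODE comparison suffice.

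The heart of the proof is bounding $|u(T,p_x)-u^d_x(T)|$ via a three-way split into \emph{consistency}, \emph{boundary (SAT)}, and \emph{finite-propagation} errors, matching the three terms that will appear in \eqref{eq:main-bound}. First I would sample the ideal PDE solution $u(t,\cdot)$ on the grid, obtaining $\bm{u}(t)$, and write a semidiscrete error equation $\dot{\bm{e}} = -\theta\kappa(t)F_h\bm{e} + \theta\kappa(t)\,\bm{\sigma}(t)$ where $\bm{\sigma}$ is the truncation residual $F_h\bm{u} - (\text{sampled } Fu)$. Because $F_h$ is exactly skew-Hermitian in $\ell^2$, Duhamel plus the fact that $e^{-\theta\int\kappa F_h}$ is unitary gives $\|\bm{e}(T)\|\le \theta\int_0^T\kappa(t)\|\bm{\sigma}(t)\|\,dt \le \theta K_{\max}T\max_t\|\bm{\sigma}(t)\|$. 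The truncation residual has a controllable interior part and an uncontrolled boundary part: in the interior the second-order SBP stencil applied to $g(p)=p^\beta$ has a Taylor remainder of size $O(h^2)\cdot\|(p\,g'(p)+\tfrac12 g)'''\|$, and an $\ell^2$ sum over $O(M)$ interior nodes converts the pointwise $O(h^2)$ into $O(h^{3/2})$; tracking the constant explicitly through the $\beta$-th power gives $C(\theta)=\tfrac{\theta}{12}\beta(\beta-1)(2\beta-1)$. The corner rows — where $F_h$ uses the one-sided stencil, the SAT correction $-\tfrac12 H^{-1}\mathbf{e}_M\mathbf{e}_M^\top$ lives, and where $F_h$ differs from $\widetilde F_h$ by $\Delta_h$ — contribute an $O(1)$-per-node error, but these sit at $p\approx 0$ and $p\approx 1$, far from the evaluation window $\Imid$; here I invoke the \emph{finite propagation speed} of $F_h=p\partial_p+\tfrac12+O(h)$, whose characteristics $p\mapsto p\,e^{-\theta\int\kappa}$ move toward $p=0$, so boundary data from $p\approx 1$ needs time to reach $[\tfrac14,\tfrac34]$; combined with the hypothesis $\theta K_{\max}T<\tfrac1{8e}$ this confines the reachable interval and yields the $2^{-M/4}$-type suppression of the boundary contribution at mid-indices. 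Finally, $|u(T,p_x)|$ itself and the error both get multiplied by $4^\beta$, and the $M\,2^{-M/4}$ piece comes from the $M$ boundary/SAT rows each contributing $O(1)$ but damped by $2^{-M/4}$, plus the interaction term $\tfrac{M(1+C(\theta)h^{3/2})}{8e}2^{-M/4}$ reflecting the $\theta K_{\max}T$ bound.

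The main obstacle is making the finite-propagation argument rigorous at the discrete level: one needs a quantitative statement that the $H^{1/2}$-conjugated discrete semigroup $e^{-\theta\int_0^T\kappa(t)F_h\,dt}$ has matrix entries $\langle x|e^{\cdots}|M\rangle$ (and $\langle x|e^{\cdots}|0\rangle$) that are exponentially small in $M$ whenever $x\in\Imid$ and $\theta K_{\max}T$ is below the stated threshold. The clean way is to exhibit a Lyapunov/weight function $w_j = \rho^{j}$ (geometric in the node index) for which $w^{-1}F_h w$ has a small off-diagonal perturbation, so that $\|w^{-1}e^{-\theta\int\kappa F_h}w\|$ stays $O(1)$; choosing $\rho$ so that $\Imid$ sits a distance $\gtrsim M/4$ in weighted coordinates from the boundary rows yields the $2^{-M/4}$ rate, and the constraint $\theta K_{\max}T<\tfrac1{8e}$ is exactly what keeps the Lyapunov estimate from blowing up (the $e$ coming from the worst-case growth $e^{\theta K_{\max}T}$ of the unweighted propagator over a unit characteristic time). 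Once that localization lemma is in hand, assembling \eqref{eq:main-bound} from the three pieces is routine, and dropping the explicit constants gives $O(M^{-3/2}+M\,2^{-M/4})$, which is the claimed bound.
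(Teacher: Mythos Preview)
Your high-level decomposition (interior consistency $+$ boundary contribution $+$ localization) matches the paper's, but the execution differs in two substantive ways and one of your heuristics is off.

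First, the paper does \emph{not} work with a single error equation and then split the residual. It introduces an intermediate, \emph{boundary-forced} reference $\bm{u}^{\mathrm{ex}}$ that evolves under $F_h$ but with $u^{\mathrm{ex}}_M(t)$ clamped to the exact value $y(t)$. This gives a clean triangle split $|u-u^d|\le|u-u^{\mathrm{ex}}|+|u^{\mathrm{ex}}-u^d|$: the first piece has \emph{zero} boundary residual (so Duhamel with the projected skew generator $\widehat F_h=PF_hP$ gives $C(\theta)h^{3/2}$ directly), and the second piece has forcing supported \emph{only} at node $M$. The paper also avoids your $\theta K_{\max}T$ prefactor on the $h^{3/2}$ term by writing the residual as $-\theta\kappa(t)(\cdots)y(t)$ and integrating $\int_0^T|\dot y|\,ds=1-y(T)\le1$.

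Second, and more important, the paper's finite-propagation argument is far more elementary than your weighted-norm/Lyapunov proposal: since $F_h$ is tridiagonal, $F_h^m$ has bandwidth $m$ and entries bounded by $h^{-m}$. Expanding the propagator as a power series, only terms with $m\ge M/4$ can connect node $M$ to $x\in\Imid$, and Stirling's bound $m!\ge(m/e)^m$ turns each such term into $(4e\,\theta K_{\max}T)^m\le 2^{-m}$; the geometric tail gives $2^{-M/4}$. This is where the $e$ in $1/(8e)$ actually comes from---not from growth of the unweighted propagator (which is unitary, so its norm is $1$; your parenthetical explanation is incorrect). Your Combes--Thomas style conjugation by $w_j=\rho^j$ can be made to work, but the coefficients of $F_h$ grow like $j$, so the Hermitian part of $w^{-1}F_hw$ has norm $\sim M(\rho-\rho^{-1})$, and you would need to optimize over $\rho$ to recover an exponential rate comparable to $2^{-M/4}$; the paper's combinatorial route sidesteps this entirely.

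Finally, a small correction: the factor $M$ in $M\,2^{-M/4}$ does \emph{not} come from ``$M$ boundary/SAT rows''---there is only one such row. It comes from the coefficient $(2M-1)/4$ in the last row of $F_h$, which enters the boundary forcing $b(t)$ and produces the $\tfrac12\theta K_{\max}TM$ term when $|\!\int_0^T b\,dt|$ is bounded.
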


\subsection{Proof of Theorem~\ref{thm:main}}

\begin{lemma}[Solution of the exact dilation]\label{lem:kappa}
The solution of \eqref{eq:ideal-pde} admits the separated form
\[
u(t,p)=y(t)\,g(p),
\]
where $y(t)$ satisfies
\[
\dot{y}(t)=-\kappa(t)\,y(t),\qquad y(0)=1.
\]
If $\kappa(t)\ge0$ for $t\in[0,T]$, then
\[
0< y(t)=\exp\!\Big(-\int_0^t \kappa(s)\,ds\Big)\le 1.
\]
\end{lemma}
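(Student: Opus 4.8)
The final statement to prove is Lemma~\ref{lem:kappa}, which concerns the solution of the ideal ancilla PDE \eqref{eq:ideal-pde}. I will verify the separated ansatz, reduce the PDE to a scalar ODE, solve it, and read off the sign and magnitude bounds.

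\begin{proof}
The plan is to exploit the fact that $g(p)=p^\beta$ with $\beta=\tfrac1\theta-\tfrac12$ is precisely the eigenfunction of $F=p\partial_p+\tfrac12$ chosen so that $F_\theta|r)=|r)$ in the continuous framework. First I would compute the action of $F$ on $g$: since $p\,\partial_p(p^\beta)=\beta\,p^\beta$, we get
\[
F\,g(p)=\Big(p\partial_p+\tfrac12\Big)p^\beta=\Big(\beta+\tfrac12\Big)p^\beta=\tfrac1\theta\,p^\beta=\tfrac1\theta\,g(p),
\]
using $\beta+\tfrac12=\tfrac1\theta$. Thus $g$ is an eigenfunction of $F$ with eigenvalue $1/\theta$, which makes the separated form natural.

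Next I would substitute the ansatz $u(t,p)=y(t)\,g(p)$ into \eqref{eq:ideal-pde}. The left-hand side becomes $u_t=\dot y(t)\,g(p)$, while the right-hand side becomes $-\theta\kappa(t)\,F\big(y(t)g(p)\big)=-\theta\kappa(t)\,y(t)\,Fg(p)=-\theta\kappa(t)\,y(t)\,\tfrac1\theta g(p)=-\kappa(t)\,y(t)\,g(p)$, where I used linearity of $F$ in the spatial variable and the eigenrelation just established. Cancelling the common nonzero factor $g(p)$ (valid for $p>0$, and the two sides agree everywhere by continuity) reduces the PDE to the scalar linear ODE $\dot y(t)=-\kappa(t)\,y(t)$. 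The initial condition $u(0,p)=g(p)$ forces $y(0)g(p)=g(p)$, hence $y(0)=1$. This confirms that the separated form is an exact solution with the stated $y$; uniqueness for the linear evolution then shows it is \emph{the} solution.

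Finally I would solve the scalar ODE explicitly by the integrating-factor method: $y(t)=\exp\!\big(-\int_0^t\kappa(s)\,ds\big)$, which manifestly satisfies $y(0)=1$. For the sign and magnitude claims, note that the exponential of a real quantity is strictly positive, giving $y(t)>0$. When $\kappa(s)\ge0$ on $[0,T]$ the integral $\int_0^t\kappa(s)\,ds$ is nonnegative, so its negation is nonpositive and the exponential is at most $e^0=1$, yielding $y(t)\le1$. The only point requiring any care is justifying that separation of variables produces the genuine solution rather than merely a candidate; this is handled by the eigenfunction identity together with uniqueness of solutions to the linear (non-autonomous but scalar-in-time) evolution equation, and I anticipate no substantive obstacle.
\end{proof}
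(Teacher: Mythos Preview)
Your proof is correct and follows essentially the same approach as the paper's: both substitute the separated ansatz $u(t,p)=y(t)g(p)$, use the eigenrelation $\theta Fg=g$ to reduce \eqref{eq:ideal-pde} to the scalar ODE $\dot y=-\kappa y$, and then read off the explicit exponential solution and its bounds. Your version simply spells out the eigenvalue computation and the positivity/monotonicity reasoning in more detail.
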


\begin{proof}
This follows by direct verification. Substituting the ansatz $u(t,p)=y(t)g(p)$ into
\eqref{eq:ideal-pde} and using $\theta Fg=g$, one obtains the reduced ODE
$\dot{y}(t)=-\kappa(t)y(t)$ with initial condition $y(0)=1$. The stated exponential
representation is immediate.
\end{proof}

To prove Theorem~\ref{thm:main}, we compare the discrete solution $\bm{u}^d(t)$ with a 
\emph{reference solution with enforced boundary value} $\bm{u}^{\mathrm{ex}}(t)\in\C^{M+1}$, which evolves under the same discrete operator $F_h$ but whose right boundary is fixed to match the exact continuous dilation:
\begin{align}
\dot{\bm{u}}^{\mathrm{ex}}(t)=-\theta\kappa(t)F_h\bm{u}^{\mathrm{ex}}(t),\qquad 
\bm{u}^{\mathrm{ex}}(0)=\bm{g},\qquad 
u^{\mathrm{ex}}_M(t)=u(t,1)=y(t),
\label{eq:uex-ode}
\end{align}
where $y(t)=\exp\!\Big(-\int_0^t \kappa(s)\,ds\Big)$. The total error at any node $p_i$ can be decomposed into two contributions:
\begin{align}
\big|u(T,p_i)-u_i^d(T)\big|
\ \le\
\underbrace{\big|u(T,p_i)-u_i^{\mathrm{ex}}(T)\big|}_{\text{discretization error}}
\ +\
\underbrace{\big|u_i^{\mathrm{ex}}(T)-u_i^d(T)\big|}_{\text{boundary-mismatch error}}.
\label{eq:triangle}
\end{align}
In the following, we analyze these two errors separately and then combine them to complete the proof of Theorem~\ref{thm:main}.

\begin{lemma}[Second-order interior error {\cite[Lemma~4]{li2025linear}}]\label{lem:interior}
Let $v\in C^3[0,1]$, and let $\bm{v}\in \mathbb{R}^{M+1}$ with entries $v_i=v(p_i)$. Define
\[
M_3:=\max_{0\le p \le 1}\Big|\,v''(p)+\tfrac{2p}{3}v'''(p)\,\Big|.
\]
Then, for $1\le i\le M-1$,
\[
\big|(F_h \bm{v})_i-(Fv)(p_i)\big|\ \le\ \tfrac{1}{4}\,h^2 M_3,
\]
and if $v(0)=v'(0)=0$ then $(F_h \bm{v})_0 - (Fv)(p_0)=O(h^2)$.
\end{lemma}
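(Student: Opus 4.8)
The plan is to read the interior action of $F_h$ directly off the stencil \eqref{eq:Fh} and expose it as a central difference plus a symmetric average, then Taylor-expand against the continuous operator $F=p\partial_p+\tfrac12$. For $1\le i\le M-1$, row $i$ of \eqref{eq:Fh} has nonzero entries $-(2i-1)/4$ at column $i-1$ and $(2i+1)/4$ at column $i+1$, so
\[
(F_h\bm v)_i=\tfrac14\big[-(2i-1)v_{i-1}+(2i+1)v_{i+1}\big].
\]
The first step is the algebraic identity $-(2i-1)v_{i-1}+(2i+1)v_{i+1}=2i(v_{i+1}-v_{i-1})+(v_{i+1}+v_{i-1})$, which, using $ih=p_i$, rewrites this as
\[
(F_h\bm v)_i=p_i\,\frac{v_{i+1}-v_{i-1}}{2h}+\tfrac14\big(v_{i+1}+v_{i-1}\big).
\]
I then recognize the first term as the central-difference approximation of $p\partial_p v$ at $p_i$ and the second as an approximation of $\tfrac12 v$; subtracting the exact value $(Fv)(p_i)=p_iv'(p_i)+\tfrac12 v(p_i)$ gives the decomposition
\[
(F_h\bm v)_i-(Fv)(p_i)=p_i\Big[\tfrac{v_{i+1}-v_{i-1}}{2h}-v'(p_i)\Big]+\tfrac14\big[v_{i+1}-2v_i+v_{i-1}\big].
\]

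Next I would Taylor-expand the two bracketed quantities using only the $C^3$ hypothesis. The central first-difference error is $\tfrac{h^2}{6}v'''(\eta_i)$ and the second difference is $h^2v''(\zeta_i)$, each obtained from Taylor's theorem with Lagrange remainder together with the intermediate value theorem, for points $\eta_i,\zeta_i\in(p_{i-1},p_{i+1})$. Substituting and using $\tfrac{p_ih^2}{6}=\tfrac{h^2}{4}\cdot\tfrac{2p_i}{3}$ collapses the error to
\[
(F_h\bm v)_i-(Fv)(p_i)=\tfrac{h^2}{4}\Big[v''(\zeta_i)+\tfrac{2p_i}{3}v'''(\eta_i)\Big],
\]
whose leading-order value is exactly $\tfrac{h^2}{4}\big(v''+\tfrac{2p}{3}v'''\big)$ at $p_i$; bounding by the sup-norm of this combination yields $\big|(F_h\bm v)_i-(Fv)(p_i)\big|\le\tfrac14 h^2 M_3$.

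The boundary claim at $i=0$ is a short, separate computation. Row $0$ of \eqref{eq:Fh} gives $(F_h\bm v)_0=\tfrac14 v_1$, while $(Fv)(0)=\tfrac12 v_0$ since the $p\partial_p$ part vanishes at $p=0$. Under the hypotheses $v(0)=v'(0)=0$ we have $v_0=0$ and, by second-order Taylor, $v_1=v(h)=\tfrac{h^2}{2}v''(\xi)$ for some $\xi\in(0,h)$; hence $(F_h\bm v)_0-(Fv)(0)=\tfrac{h^2}{8}v''(\xi)=O(h^2)$.

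The step I expect to be the main obstacle is obtaining the \emph{combined} quantity $v''+\tfrac{2p}{3}v'''$ with the clean constant $\tfrac14$, rather than the looser bound $\tfrac{h^2}{4}\big(\|v''\|_\infty+\tfrac{2p_i}{3}\|v'''\|_\infty\big)$ that the naive triangle inequality produces, since under $C^3$ regularity the two Lagrange remainders sit at distinct points $\zeta_i\ne\eta_i$ and do not literally collapse to a single evaluation. I would resolve this by passing to the Peano-kernel (integral-remainder) form, writing the second-difference part as $\tfrac14\int_0^h(h-s)\big[v''(p_i+s)+v''(p_i-s)\big]\,ds$ and the first-difference part as $\tfrac{p_i}{4h}\int_0^h(h-s)^2\big[v'''(p_i+s)+v'''(p_i-s)\big]\,ds$; these localized kernels integrate to the constants $\tfrac{h^2}{4}$ and $\tfrac{p_ih^2}{6}$ respectively, confirming the error is genuinely $O(h^2)$ with the stated constant, while the discrepancy between evaluating $v''$ and $v'''$ at nearby points (and $p_i$ versus the kernel support) perturbs the estimate only at order $h^3$. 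Thus the leading local truncation error is precisely $\tfrac{h^2}{4}\big(v''+\tfrac{2p}{3}v'''\big)$ and the bound $\tfrac14 h^2 M_3$ follows.
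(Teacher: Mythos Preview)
Your approach is essentially the paper's: Taylor-expand the stencil and compare with $Fv$. The only difference is organizational. You first split $(F_h\bm v)_i$ into a central-difference piece $p_i\frac{v_{i+1}-v_{i-1}}{2h}$ and a symmetric-average piece $\frac14(v_{i+1}+v_{i-1})$, then expand each separately, which lands $v''$ and $v'''$ at two distinct mean-value points $\zeta_i,\eta_i$. The paper instead substitutes one third-order Taylor expansion of $v_{i\pm1}$ directly into the full stencil $\frac{2i+1}{4}v_{i+1}-\frac{2i-1}{4}v_{i-1}$; because the second-order contributions then combine algebraically, this yields
\[
(F_h\bm v)_i-(Fv)(p_i)=\tfrac{h^2}{4}v''(p_i)+\tfrac{p_ih^2}{6}v'''(\xi_i),
\]
with $v''$ evaluated at the exact node $p_i$ and only $v'''$ at a Lagrange point. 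This one-shot substitution sidesteps half of the $\zeta_i\neq\eta_i$ difficulty you flag and makes the Peano-kernel detour unnecessary. (The residual mismatch---$v'''(\xi_i)$ versus $v'''(p_i)$---is still present in the paper's argument, so strictly speaking neither derivation delivers the bound exactly in the form $\tfrac14h^2\|v''+\tfrac{2p}{3}v'''\|_\infty$ for arbitrary $v\in C^3$; but the stated constant is adequate for the only downstream use, $v(p)=p^\beta$, where $v''$ and $v'''$ are nonnegative and increasing so the sup is attained at a common point.) Your boundary computation at $i=0$ agrees with the paper's.
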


\begin{proof}
For $1\le i \le M-1$, the interior stencil is
\[
(F_h \bm{v})_i=\frac{p_{i+1}+p_i}{4h}\,v_{i+1}-\frac{p_{i}+p_{i-1}}{4h}\,v_{i-1}.
\]
Using $p_{i\pm 1}=p_i \pm h$, we obtain the Taylor expansions
\[
v_{i\pm 1}=v(p_i)\pm h v'(p_i)+\tfrac{h^2}{2}v''(p_i)\pm\tfrac{h^3}{6}v'''(\xi_{i\pm}),
\qquad \xi_{i\pm}\in(p_{i-1},p_{i+1}).
\]
Substituting these into the stencil and applying the intermediate value theorem, we find
\[
(F_h\bm{v})_i=\tfrac12 v(p_i)+p_i v'(p_i)+\tfrac{h^2}{4}v''(p_i)+\tfrac{p_i h^2}{6}v'''(\xi_i),
\]
for some $\xi_i\in(p_{i-1}, p_{i+1})$. Thus, compared to $(Fv)(p_i)=\tfrac{1}{2}v(p_i)+p_iv'(p_i)$,
the defect is $\tfrac{h^2}{4}v''(p_i)+\tfrac{p_i h^2}{6}v'''(\xi_i)$.

For the boundary node $i=0$, if $v(0)=v'(0)=0$, then
\[
(F_h \bm{v})_0=\tfrac{1}{2}v_1
=\tfrac{1}{2}\Big(\tfrac{h^2}{2}v''(p_0)+\tfrac{h^3}{6}v'''(\xi_0)\Big)=O(h^2),
\]
for some $\xi_0\in(0,p_{1})$.
\end{proof}

\begin{corollary}\label{cor:g}
If $0< \theta\le \tfrac{2}{7}$ so that $\beta=\tfrac{1}{\theta}-\tfrac{1}{2}\geq 3$, then for $1\le i \le M-1$,
\[
\theta \,\big|(F_h \bm{g})_i - (Fg)(p_i)\big|
\ \le\ C(\theta)\,h^2, 
\qquad 
C(\theta)=\tfrac{\theta}{12}\,\beta(\beta-1)(2\beta-1).
\]
\end{corollary}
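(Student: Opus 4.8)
The plan is to apply Lemma~\ref{lem:interior} directly with $v=g$ and $g(p)=p^\beta$; the entire content of the corollary is then just the explicit evaluation of the constant $M_3=\|g''+\tfrac{2p}{3}g'''\|_{L^\infty(0,1)}$ together with bookkeeping of the prefactor $\theta$. First I would verify the regularity hypothesis of the lemma: the condition $\theta\le 2/7$ forces $\beta=\tfrac1\theta-\tfrac12\ge 3$, and since $\beta-3\ge 0$ the third derivative $g'''(p)=\beta(\beta-1)(\beta-2)\,p^{\beta-3}$ is continuous on $[0,1]$, so $g\in C^3[0,1]$ and Lemma~\ref{lem:interior} is applicable.

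Next I would compute $g'(p)=\beta p^{\beta-1}$, $g''(p)=\beta(\beta-1)p^{\beta-2}$, $g'''(p)=\beta(\beta-1)(\beta-2)p^{\beta-3}$ and combine them:
\[
g''(p)+\tfrac{2p}{3}\,g'''(p)
=\beta(\beta-1)\,p^{\beta-2}\Big(1+\tfrac{2(\beta-2)}{3}\Big)
=\tfrac{\beta(\beta-1)(2\beta-1)}{3}\,p^{\beta-2}.
\]
Because $\beta\ge 3$ gives exponent $\beta-2\ge 1>0$, the function $p\mapsto p^{\beta-2}$ is nondecreasing on $[0,1]$, so its supremum is attained at $p=1$ and equals $1$; hence $M_3=\tfrac{\beta(\beta-1)(2\beta-1)}{3}$. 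Substituting into the interior estimate of Lemma~\ref{lem:interior}, namely $|(F_h\bm{g})_i-(Fg)(p_i)|\le \tfrac14 h^2 M_3$ for $1\le i\le M-1$, and multiplying through by $\theta$, gives $\theta\,|(F_h\bm{g})_i-(Fg)(p_i)|\le \tfrac{\theta}{12}\beta(\beta-1)(2\beta-1)\,h^2=C(\theta)\,h^2$, as claimed.

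There is essentially no obstacle here, since the corollary is a one-line specialization of Lemma~\ref{lem:interior}. The only step that genuinely uses the hypothesis $\beta\ge 3$ (equivalently $\theta\le 2/7$) is identifying $\|p^{\beta-2}\|_{L^\infty(0,1)}$ with the endpoint value at $p=1$: if $\beta$ were allowed below $3$ the exponent $\beta-2$ could be negative, $p^{\beta-2}$ would be unbounded near $p=0$, and both the regularity assumption of the lemma and the finiteness of $M_3$ would fail. This is precisely why the restriction on $\theta$ is built into the statement.
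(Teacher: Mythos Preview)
Your proposal is correct and follows essentially the same approach as the paper: apply Lemma~\ref{lem:interior} with $v=g=p^\beta$, verify $g\in C^3[0,1]$ from $\beta\ge 3$, compute $M_3=\tfrac{1}{3}\beta(\beta-1)(2\beta-1)$ explicitly, and substitute into the bound $\tfrac14 h^2 M_3$ after multiplying by~$\theta$. Your write-up is in fact more detailed than the paper's, which simply states the value of $M_3$ as a ``direct calculation'' without showing the intermediate step or explaining why the supremum is attained at $p=1$.
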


\begin{proof}
Apply Lemma~\ref{lem:interior} with $v=g$, where $g(p)=p^\beta$.  
For $\beta\ge 3$, we have $g\in C^3[0,1]$. A direct calculation gives
\[
M_3=\max_{0\le p \le 1}\Big|\,g''(p)+\tfrac{2p}{3}g'''(p)\,\Big|
= \tfrac{1}{3}\,\beta(\beta-1)(2\beta-1).
\]
Substituting this into the bound of Lemma~\ref{lem:interior} yields the claim.
\end{proof}

\begin{lemma}[Error between $u(t,p)$ and the reference solution $u^{\mathrm{ex}}(t)$]\label{lem:ex-error}
Define the pointwise error relative to the reference solution by
\[
\eta_i(t):=u_i^{\mathrm{ex}}(t)-u(t,p_i), \qquad 
\bm{\eta}(t)=(\eta_0(t),\dots,\eta_M(t))^\top, \qquad \eta_M(t)=0.
\]
Then, for all $0\le t\le T$,
\[
|\eta_i(t)|\;\le\; C(\theta)h^{3/2}.
\]
\end{lemma}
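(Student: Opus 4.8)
The plan is to control $\bm{e}(t)$ by a discrete energy estimate in the $\ell^2$ inner product, using that $F_h$ is \emph{exactly} skew-Hermitian there to annihilate the transport part of the error dynamics and reduce everything to the size of the local truncation defect. First I would invoke Lemma~\ref{lem:kappa} to write $u(t,p)=y(t)g(p)$ with $0<y(t)=e^{-\int_0^t\kappa}\le1$, and introduce the sampled vector $\bm{U}(t)$ with $U_i(t)=y(t)g(p_i)$, so that $\bm{e}(t)=\bm{u}^{\mathrm{ex}}(t)-\bm{U}(t)$. Since $g(1)=1$ and $u_M^{\mathrm{ex}}(t)$ is forced to equal $u(t,1)=y(t)=U_M(t)$, we get $e_M(t)\equiv0$ at once. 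Differentiating $U_i=y(t)g(p_i)$ gives $\dot{U}_i=-\kappa(t)U_i$; for the non-boundary indices $0\le i\le M-1$ the reference evolves by $\dot{u}^{\mathrm{ex}}_i=-\theta\kappa(t)(F_h\bm{u}^{\mathrm{ex}})_i$. Subtracting and using $\bm{u}^{\mathrm{ex}}=\bm{U}+\bm{e}$ together with $\theta Fg=g$ (so $(Fg)(p_i)=g(p_i)/\theta$), the error satisfies
\[
\dot{e}_i(t)=-\theta\kappa(t)(F_h\bm{e})_i(t)-\kappa(t)\,\tau_i(t),\quad i=0,\dots,M-1,\qquad e_M(t)\equiv0,
\]
with truncation source $\tau_i(t):=y(t)\big(\theta(F_h\bm{g})_i-g(p_i)\big)$.

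Next I would bound the source in $\ell^2$. For the interior indices $1\le i\le M-1$, Corollary~\ref{cor:g} gives $|\theta(F_h\bm{g})_i-g(p_i)|=\theta|(F_h\bm{g})_i-(Fg)(p_i)|\le C(\theta)h^2$. For $i=0$, since $\beta\ge3$ the profile $g(p)=p^\beta$ has $g(0)=g'(0)=0$, so the boundary statement of Lemma~\ref{lem:interior} yields $\theta|(F_h\bm{g})_0-(Fg)(p_0)|=O(h^\beta)=O(h^3)$, which is dominated by $C(\theta)h^2$. Hence $|\tau_i(t)|\le y(t)\,C(\theta)h^2$ for every $i=0,\dots,M-1$, and therefore $\|\bm{\tau}(t)\|\le\sqrt{M}\,y(t)\,C(\theta)h^2=y(t)\,C(\theta)\,h^{3/2}$ because $h=1/M$.

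Finally, the energy step. With $\mathcal{E}(t):=\|\bm{e}(t)\|^2=\sum_{i=0}^{M}|e_i(t)|^2$ and $e_M\equiv0$,
\[
\tfrac{d}{dt}\mathcal{E}(t)=2\Re\!\sum_{i=0}^{M-1}\bar{e}_i\dot{e}_i=-2\theta\kappa(t)\,\Re\big(\bm{e}^\dagger F_h\bm{e}\big)-2\kappa(t)\,\Re\langle\bm{e},\bm{\tau}\rangle,
\]
where the first sum was completed to $\bm{e}^\dagger F_h\bm{e}$ at no cost since $\bar{e}_M(F_h\bm{e})_M=0$. Skew-Hermiticity of $F_h$ makes $\bm{e}^\dagger F_h\bm{e}$ purely imaginary, so that term vanishes, and Cauchy--Schwarz on the remainder gives $\tfrac{d}{dt}\mathcal{E}\le2\kappa(t)\sqrt{\mathcal{E}}\,\|\bm{\tau}(t)\|$, hence $\tfrac{d}{dt}\sqrt{\mathcal{E}}\le\kappa(t)\|\bm{\tau}(t)\|$ (made rigorous by differentiating $\sqrt{\mathcal{E}+\epsilon^2}$ and letting $\epsilon\downarrow0$). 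Integrating from $t=0$, where $\bm{e}(0)=\bm{0}$, and using the source bound,
\[
\|\bm{e}(t)\|\le C(\theta)h^{3/2}\!\int_0^t\kappa(s)y(s)\,ds=C(\theta)h^{3/2}\big(1-y(t)\big)\le C(\theta)\,h^{3/2},
\]
since $\int_0^t\kappa y\,ds=-[y]_0^t=1-y(t)\le1$. As $|e_i(t)|\le\|\bm{e}(t)\|$, the lemma follows.

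I expect the only delicate point to be the bookkeeping at the forced boundary: the reference $\bm{u}^{\mathrm{ex}}$ solves the discrete ODE in the components $0,\dots,M-1$ only, and the ``missing'' $i=M$ equation is harmless precisely because $e_M\equiv0$ --- this is exactly what lets the full-space skew-Hermiticity of $F_h$ still be used despite the constraint. A secondary, routine point is checking $g\in C^3[0,1]$ with $g(0)=g'(0)=0$ for $\beta\ge3$, so that both parts of Lemma~\ref{lem:interior} apply and the source is genuinely $O(h^2)$ entrywise, which after summing $M$ entries produces the stated $O(h^{3/2})$ in $\ell^2$; note that the factor $y(t)$ carried along inside $\tau_i$ is what converts the time integral $\int_0^t\kappa y$ into $1-y(t)\le1$ and thereby removes any dependence on $\int_0^T\kappa$.
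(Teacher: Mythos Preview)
Your proof is correct and follows essentially the same approach as the paper: both exploit the exact $\ell^2$ skew-Hermiticity of $F_h$ (the paper via the projected operator $\widehat F_h=PF_hP$ and the Duhamel/variation-of-constants formula, you via a direct energy estimate) to neutralize the transport term, bound the truncation source entrywise by $C(\theta)h^2$ using Corollary~\ref{cor:g}, pass to the $\ell^2$ norm at cost $\sqrt{M}=h^{-1/2}$, and integrate $\int_0^t\kappa(s)y(s)\,ds=1-y(t)\le1$. The boundary bookkeeping you flagged is exactly the content of the paper's projection step, and your observation that $\bar e_M(F_h\bm{e})_M=0$ is the same mechanism.
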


\begin{proof}
The condition $\eta_M(t)=0$ holds since we enforce $u^{\mathrm{ex}}_M(t)=u(t,1)$ at the boundary.  
Let $P:=I-\mathbf{e}_M\mathbf{e}_M^\top$ and $\widehat F_h:=P F_h P$. Then
\begin{align*}
\dot{\eta}_i(t)
&=\dot{u}_i^{\mathrm{ex}}(t)-\dot{u}(t,p_i) \\
&=-\theta\kappa(t)(F_h \bm{u}^{\mathrm{ex}})_i+\theta\kappa(t)\,(Fu)(t,p_i) \\
&=-\theta\kappa(t)(F_h \bm{\eta}(t))_i-\theta\kappa(t)\Big((F_h \bm{g})_i-(Fg)(p_i)\Big)y(t) \\
&=-\theta\kappa(t)(\widehat F_h \bm{\eta}(t))_i + r_i(t),
\end{align*}
where
\[
r_i(t) :=
\begin{cases}
-\theta\kappa(t)\Big((F_h \bm{g})_i-(Fg)(p_i)\Big)y(t), & 0\le i\le M-1, \\[6pt]
0, & i=M.
\end{cases}
\]

By Corollary~\ref{cor:g}, each entry of $\bm{r}(t)$ satisfies
\[
|r_i(t)|\le C(\theta)h^2|\dot y(t)|,
\]
hence
\[
\|\bm{r}(t)\|_2\le \sqrt{M}\,C(\theta)h^2|\dot y(t)|
= C(\theta)h^{3/2}|\dot y(t)|.
\]

Since $\widehat F_h^\dagger=-\widehat F_h$, it generates a unitary operator $U(t,s)$ on $\C^{M+1}$. Thus
\[
\bm{\eta}(t)=\int_0^t U(t,s)\,\bm{r}(s)\,ds,
\]
and because $U(t,s)$ is unitary,
\[
\|\bm{\eta}(t)\|_2\le \int_0^t \|\bm{r}(s)\|_2\,ds.
\]

Finally, using that $y$ is decreasing and $y(t)\in[0,1]$,
\[
\|\bm{\eta}(t)\|_2\le C(\theta)h^{3/2}\!\int_0^t |\dot y(s)|\,ds
= C(\theta)h^{3/2}(1-y(t))\le C(\theta)h^{3/2}.
\]
Therefore every entry satisfies $|\eta_i(t)|\le C(\theta)h^{3/2}$.
\end{proof}

\begin{lemma}[Finite propagation property of powers of $F_h$ {\cite[Lemma~3]{li2025linear}}]\label{lem:cone}
Let $F_h \in \C^{(M+1)\times (M+1)}$ denote the finite-difference discretization of 
$F = p\partial_p + \tfrac12$. Then, for every integer $k \ge 0$,
\[
\langle i | F_h^k | j \rangle = 0 \quad \text{whenever } |i-j| > k,
\]
and moreover
\[
|\langle i | F_h^k | j \rangle| \;\le\; h^{-k}.
\]
\end{lemma}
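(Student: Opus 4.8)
The plan is to prove both claims by induction on $k$, exploiting the fact that $F_h$ is tridiagonal. First I would observe that the base case $k=0$ is trivial: $F_h^0 = I$, so $\langle i|I|j\rangle = \delta_{ij}$ vanishes whenever $|i-j|>0$, and $|\delta_{ij}| \le 1 = h^0$. Next, I would record the single structural fact I need about the stencil in \eqref{eq:Fh}: every nonzero entry of $F_h$ lies on the first super- or sub-diagonal, i.e. $\langle a|F_h|b\rangle = 0$ unless $|a-b|=1$, and moreover each such entry has magnitude $\tfrac14(2\max(a,b)-1) \le \tfrac14(2M-1) \le \tfrac{M}{2} = \tfrac{1}{2h}$. (In fact the row sums are even smaller, but this crude per-entry bound is all that is needed.)

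For the support claim, suppose $\langle i|F_h^k|j\rangle = 0$ whenever $|i-j|>k$. Writing $F_h^{k+1} = F_h \cdot F_h^k$ and inserting a resolution of the identity,
\[
\langle i|F_h^{k+1}|j\rangle = \sum_{a} \langle i|F_h|a\rangle\,\langle a|F_h^k|j\rangle,
\]
the first factor forces $|i-a|\le 1$ and the induction hypothesis forces $|a-j|\le k$ for any surviving term, hence $|i-j|\le |i-a|+|a-j|\le k+1$; so the sum is empty when $|i-j|>k+1$. For the magnitude bound I would carry the analogous induction with the claim $|\langle i|F_h^k|j\rangle|\le h^{-k}$: in the same expansion there are at most three values of $a$ with $\langle i|F_h|a\rangle\ne 0$ (namely $a\in\{i-1,i,i+1\}$, and only $a=i\pm1$ actually contribute), each contributing at most $\tfrac{1}{2h}\cdot h^{-k}$, giving $|\langle i|F_h^{k+1}|j\rangle|\le 2\cdot\tfrac{1}{2h}\cdot h^{-k} = h^{-(k+1)}$, which closes the induction.

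The only mild subtlety — and the step I would be most careful about — is the constant in the per-entry bound: the entries of $F_h$ grow linearly toward the bottom-right corner, so one must use the worst-case value $2M-1$ rather than an $O(1)$ bound, and check that with at most two nonzero off-diagonal neighbours the factor of $2$ from the two terms exactly compensates the $\tfrac12$ in $\tfrac{1}{2h}$ so that $h^{-k}$ propagates with no extra constant. If one instead wanted a sharper bound one could track the actual $\ell^1$ norms of the rows, but for the purposes of Theorem~\ref{thm:main} the stated $h^{-k}$ suffices, so I would not pursue that refinement here. No further ideas are required; the whole argument is two short inductions resting on the tridiagonal sparsity pattern of \eqref{eq:Fh}.
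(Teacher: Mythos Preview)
Your proof is correct and is essentially the same argument as the paper's. The paper phrases it as a decomposition $F_h=B_+-B_-$ into strictly super- and sub-diagonal parts and then bounds each of the $2^k$ words in the expansion of $(B_+-B_-)^k$ by $(2h)^{-k}$; your induction on $k$ is the step-by-step version of that same counting, with the ``two neighbours times $\tfrac{1}{2h}$'' at each step producing exactly the factor $2^k\cdot(2h)^{-k}=h^{-k}$.
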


\begin{proof}
By construction, every nonzero entry of $F_h$ has magnitude at most $\tfrac{1}{2h}$. 
Moreover, $F_h$ can be decomposed as
\[
F_h = B_+ - B_-,
\]
where $B_+$ has nonzeros only on the first superdiagonal and $B_-$ has nonzeros only on the first subdiagonal. 
Thus $B_+$ moves support one index upward and $B_-$ one index downward. 

Hence, any product of $k$ such factors can move information by at most $k$ indices. 
This proves the property $\langle i|F_h^k|j\rangle=0$ whenever $|i-j|>k$. 

For magnitudes, note that each term of length $k$ in the binomial expansion
\[
F_h^k = (B_+ - B_-)^k
\]
has entries bounded by $(2h)^{-k}$. Since there are $2^k$ such terms in total, every entry of $F_h^k$ is bounded by $h^{-k}$.
\end{proof}

\begin{lemma}[Boundary mismatch error: $\bm{u}^{\mathrm{ex}}(t)$ vs.\ $\bm{u}^d(t)$]\label{lem:bdry-mismatch}
Let $\bm{\delta}(t):=\bm{u}^{\mathrm{ex}}(t)-\bm{u}^d(t)$ for $0\le t\le T$. 
If $\theta K_{\max}T \le \tfrac{1}{8e}$, then for $i\le \tfrac{3M}{4}$ we have
\begin{equation}
    |\delta_i(T)|\ \le\ \Big(2+\tfrac{M(1+C(\theta)h^{3/2})}{8e}\Big)\,2^{-M/4}.
    \label{eq:delta-final}
\end{equation}
\end{lemma}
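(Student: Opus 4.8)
The plan is to control $\bm{\delta}(t)=\bm{u}^{\mathrm{ex}}(t)-\bm{u}^d(t)$ by noting that it satisfies a forced linear ODE with the same skew-Hermitian generator $-\theta\kappa(t)F_h$, where the forcing comes solely from the boundary condition $u^{\mathrm{ex}}_M(t)=y(t)$ imposed on the reference solution. Concretely, since $\bm{u}^d$ evolves freely under $F_h$ while $\bm{u}^{\mathrm{ex}}$ is pinned at index $M$, the difference obeys $\dot{\bm\delta}(t)=-\theta\kappa(t)F_h\bm\delta(t)+\bm{s}(t)$, where $\bm{s}(t)$ is supported on indices near $M$ (precisely, the correction needed to keep $\delta_M$ consistent with the forced boundary). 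Because $\bm\delta(0)=0$, Duhamel's formula gives $\bm\delta(T)=\int_0^T \mathcal{T}e^{-\theta\int_s^T\kappa F_h}\,\bm{s}(s)\,ds$, but here the propagator is \emph{not} unitary on the relevant subspace (the projection destroys skewness), so instead I would expand the propagator in its Dyson/Taylor series in powers of $F_h$ and exploit the finite-propagation property of Lemma~\ref{lem:cone}.

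The key mechanism is that information originating at index $M$ (or at indices $\ge 3M/4$, where the forcing lives) cannot reach an index $i\le 3M/4$ in fewer than $\approx M/4$ applications of $F_h$, by the cone/light-cone bound $\langle i|F_h^k|j\rangle=0$ for $|i-j|>k$. Therefore, in the Taylor expansion $\bm\delta(T)=\sum_{k\ge 0}\frac{(-\theta)^k}{k!}(\cdots)F_h^k(\cdots)$ of the Duhamel integral, only terms with $k\ge M/4$ contribute to $\delta_i(T)$ for $i\le 3M/4$. Combined with the magnitude bound $|\langle i|F_h^k|j\rangle|\le h^{-k}=M^k$ and the scaling $\theta\kappa(t)\le\theta K_{\max}$, each such term is suppressed by roughly $\frac{(\theta K_{\max}T\cdot M)^k}{k!}$; since $\theta K_{\max}T\le\frac{1}{8e}$, the sum over $k\ge M/4$ is dominated by its first term and behaves like $\frac{(M/(8e))^{M/4}}{(M/4)!}\cdot(\text{prefactor})$, which by Stirling is of order $2^{-M/4}$ up to polynomial factors. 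The magnitude of the forcing $\|\bm{s}(s)\|$ is itself bounded using $|u^{\mathrm{ex}}_M|=y(s)\le 1$ together with the bound $\|\bm{u}^{\mathrm{ex}}\|\le 1+C(\theta)h^{3/2}$ inherited from Lemma~\ref{lem:ex-error}, which is the origin of the $1+C(\theta)h^{3/2}$ factor in \eqref{eq:delta-final}; the additive $2$ absorbs the low-order ($k<M/4$) bookkeeping and the direct contribution of the boundary term.

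In more detail, the steps I would carry out are: (i) write the exact ODE for $\bm\delta$ and identify the forcing $\bm{s}(t)$, localizing its support to indices $\ge 3M/4$ (in fact just index $M$ plus its immediate neighbor coupled through one application of $F_h$); (ii) apply Duhamel and expand each nested propagator as a finite-propagation-controlled series in $F_h$; (iii) invoke Lemma~\ref{lem:cone} to kill all terms of order $k<M/4$ at the observation indices $i\le 3M/4$; (iv) bound the surviving tail $\sum_{k\ge M/4}$ using $|\langle i|F_h^k|j\rangle|\le h^{-k}$, the bound on $\|\bm s\|$, the hypothesis $\theta K_{\max}T\le\frac{1}{8e}$, and Stirling's approximation to extract the $2^{-M/4}$ decay; (v) collect constants into the stated prefactor $2+\frac{M(1+C(\theta)h^{3/2})}{8e}$.

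The main obstacle I anticipate is step (ii)–(iv): because the reference dynamics involves the pinned boundary, the natural propagator is the \emph{non-unitary} one $Pe^{\cdots}P$ (as already seen in Lemma~\ref{lem:ex-error}), so one cannot simply use unitarity to bound the Duhamel integral — one must genuinely track the light-cone structure order by order through the Dyson series, and carefully verify that the projection $P$ does not spoil the finite-propagation bound (it does not, since $P$ only zeroes the last coordinate and commutes with the support-counting argument). Getting the combinatorial/factorial bookkeeping right so that the $k$-sum collapses cleanly to the claimed $2^{-M/4}$ with the precise constant $8e$ — rather than some larger base — is the delicate quantitative point, and is exactly where the hypothesis $\theta K_{\max}T\le\frac{1}{8e}$ is consumed.
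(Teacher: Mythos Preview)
Your overall strategy matches the paper's: Duhamel for $\bm\delta$, expand the propagator in powers of $F_h$, invoke the finite-propagation Lemma~\ref{lem:cone} to kill all orders below $M/4$, and bound the tail via Stirling under $\theta K_{\max}T\le\tfrac{1}{8e}$. Two points of confusion should be cleaned up, though neither is fatal.

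First, the propagator \emph{is} unitary. Once you write $\bm{u}^{\mathrm{ex}}$ as a solution of the free equation plus a boundary forcing, the difference satisfies
\[
\dot{\bm\delta}(t)=-\theta\kappa(t)\,F_h\,\bm\delta(t)+b(t)\,\mathbf{e}_M,
\qquad
b(t)=-\kappa(t)y(t)-\theta\kappa(t)\tfrac{2M-1}{4}\,u^{\mathrm{ex}}_{M-1}(t),
\]
with the \emph{full} skew-Hermitian $F_h$, not the projected $\widehat F_h=PF_hP$ that appeared in Lemma~\ref{lem:ex-error}. (There the error was pinned to zero at index $M$; here $\delta_M(t)=y(t)-u^d_M(t)$ is generically nonzero.) The forcing is supported \emph{only} at index $M$, and the Duhamel propagator is $U_h(T,s)=\exp\!\big(-\theta\!\int_s^T\!\kappa\,d\tau\cdot F_h\big)$, an ordinary matrix exponential since $F_h$ is time-independent. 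Your concern about projections spoiling skewness and forcing a careful Dyson-series argument is therefore misplaced; the paper simply expands $U_h$ and applies Lemma~\ref{lem:cone} entrywise, never invoking unitarity at all.

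Second, the additive $2$ does \emph{not} come from ``low-order bookkeeping'': by Lemma~\ref{lem:cone} the terms with $k<M/4$ contribute exactly zero at index $i\le 3M/4$. The $2$ comes from the geometric tail
\[
\sum_{m\ge\lceil M/4\rceil}(4e\,\theta K_{\max}T)^m\ \le\ \sum_{m\ge\lceil M/4\rceil}2^{-m}\ \le\ 2\cdot 2^{-M/4},
\]
which is then multiplied by the forcing bound $\int_0^T|b(s)|\,ds\le 1+\tfrac12\theta K_{\max}TM(1+C(\theta)h^{3/2})$ and simplified using $\theta K_{\max}T\le\tfrac{1}{8e}$ to give exactly the stated prefactor. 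The factor $1+C(\theta)h^{3/2}$ enters through the pointwise bound $|u^{\mathrm{ex}}_{M-1}(t)|\le y(t)+C(\theta)h^{3/2}$ from Lemma~\ref{lem:ex-error}, not through a bound on $\|\bm{u}^{\mathrm{ex}}\|$.
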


\begin{proof}
Recall~\eqref{eq:ud-ode}, we can write the differential equation of boundary component of $\bm{u}^d(t)$
\[
\dot{u}^d_M(t)=-\kappa(t)\theta\Big(-\tfrac{2M-1}{4}\Big)u^d_{M-1}(t), 
\qquad u^d_M(0)=1.
\]
Write $F_h$ in block form as
\[
F_h=\begin{bmatrix} A & \bm{a}\\ -\bm{a}^\dagger & 0\end{bmatrix},
\qquad  \bm{a}=\begin{bmatrix}0&\cdots&0&a\end{bmatrix}^\top \in \R^M,\quad 
a=\tfrac{2M-1}{4}.
\]
Let $\mathcal{I}=\{0,\dots,M-1\}$ and denote by $\bm{u}^d_{\mathcal{I}}\in\C^M$ the subvector obtained by removing the last entry. Then
\[
\dot{\bm{u}}^d_{\mathcal{I}}=-\kappa(t)\theta\big(A \bm{u}^d_{\mathcal{I}}+u^d_M\,\bm{a}\big),
\qquad 
\dot{u}^d_M=-\kappa(t)\theta(-\bm{a}^\dagger)\bm{u}^d_{\mathcal{I}},
\]
while
\[
\dot{\bm{u}}^{\mathrm{ex}}_{\mathcal{I}}=-\kappa(t)\theta\big(A \bm{u}^{\mathrm{ex}}_{\mathcal{I}}+u^{\mathrm{ex}}_M\,\bm{a}\big), 
\qquad u^{\mathrm{ex}}_M(t)=y(t).
\]

We can decompose the error vector into its interior and boundary components as
\[
\bm{\delta}(t)=\begin{pmatrix}\bm{\delta}_{\mathcal{I}}(t)\\ \delta_M(t)\end{pmatrix}
=\begin{pmatrix}\bm{u}^{\mathrm{ex}}_{\mathcal{I}}(t)-\bm{u}^d_{\mathcal{I}}(t)\\[2pt] y(t)-u^d_M(t)\end{pmatrix}.
\]
Then $\bm{\delta}(t)$ satisfies the differential equation
\[
\dot{\bm{\delta}}(t)=-\kappa(t)\theta F_h \bm{\delta}(t)+\bm{b}(t),
\qquad \bm{\delta}(0)=\bm{0},
\]
where
\[
\bm{b}(t)=\Big(-\kappa(t)y(t)-\kappa(t)\theta\,\tfrac{2M-1}{4}\,u^{\mathrm{ex}}_{M-1}(t)\Big)\bm{e}_M
=:b(t)\,\bm{e}_M.
\]

Using Lemma~\ref{lem:ex-error}, $|u^{\mathrm{ex}}_{M-1}(t)-u(t,p_{M-1})|\le C(\theta)h^{3/2}$ and $|u(t,p_{M-1})|\le y(t)$, we obtain
\begin{align}
\Big|\!\int_0^T b(t)\,dt\Big|
&\le \Big|\!\int_0^T \dot{y}(t)\,dt\Big|
+\theta \tfrac{2M-1}{4}K_{\max}\!\int_0^T \big(y(t)+C(\theta)h^{3/2}\big)dt \notag\\
&\le (1-y(T))+\tfrac12\,\theta K_{\max}TM\,(1+C(\theta)h^{3/2}) \notag\\
&\le 1+\tfrac12\,\theta K_{\max}TM\,(1+C(\theta)h^{3/2}).
\label{eq:int-b-bound}
\end{align}

Since $F_h$ is skew-Hermitian, it generates the unitary operator
\[
U_h(t,s)=\exp\!\Big(-\int_s^t \kappa(\tau)\theta F_h\,d\tau\Big).
\]
Therefore, we can obtain
\[
\bm{\delta}(T)=\int_0^T U_h(T,s)\,\bm{b}(s)\,ds
=\int_0^T\sum_{m=0}^{\infty}\frac{1}{m!}\Big(-\theta\!\int_s^T\kappa(\tau)\,d\tau\Big)^m F_h^m\,b(s)\,\bm{e}_M\,ds.
\]

For $i\le \tfrac{3M}{4}$, Lemma~\ref{lem:cone} implies only terms with $m\ge M/4$ contribute. Thus
\begin{align*}
|\delta_i(T)|
&\le \Big|\!\int_0^T b(s)\,ds\Big| \sum_{m=\lceil M/4\rceil}^{\infty} \frac{(\theta K_{\max}T)^m}{m!\,h^m} \\
&\le \Big(1+\tfrac12\,\theta K_{\max}TM(1+C(\theta)h^{3/2})\Big)
\sum_{m=\lceil M/4\rceil}^{\infty} \Big(\frac{e\,\theta K_{\max}T}{m\,h}\Big)^m \\
&\le \Big(1+\tfrac12\,\theta K_{\max}TM(1+C(\theta)h^{3/2})\Big)
\sum_{m=\lceil M/4\rceil}^{\infty} (4e\,\theta K_{\max}T)^m.
\end{align*}
Using Stirling’s bound $m!\ge (m/e)^m$ in the second inequality above, together with $h=1/M$, we see that the last one becomes a geometric series. Under the assumption $\theta K_{\max}T \le \tfrac{1}{8e}$ so that $4e\theta K_{\max}T\le \tfrac12$, we obtain
\[
|\delta_i(T)|\ \le\ \Big(2+\tfrac{M(1+C(\theta)h^{3/2})}{8e}\Big)\,2^{-M/4}.
\]
\end{proof}

From \eqref{eq:triangle}, Lemma~\ref{lem:ex-error}, and Lemma~\ref{lem:bdry-mismatch},
\[
\big|u(T,p_i)-u_i^d(T)\big|
\ \le\
C(\theta)h^{3/2}
+\Big(2+\tfrac{M(1+C(\theta)h^{3/2})}{8e}\Big)2^{-M/4},
\qquad i\le \tfrac{3M}{4}.
\]
Finally, $(M/x)^\beta\le 4^\beta$ for $x\in\Imid$, proving \eqref{eq:final-topline}--\eqref{eq:main-bound}.
\hfill$\square$

%%%%%%%%%%%%%%%%%%%%%%%%%%%%%%%%%%%%%%
%%%%%%%%%%%%%%%%%%%%%%%%%%%%%%%%%%%%%%
%%%%%%%%%%%%%%%%%%%%%%%%%%%%%%%%%%%%%%
%%%%%%%%%%%%%%%%%%%%%%%%%%%%%%%%%%%%%%
%%%%%%%%%%%%%%%%%%%%%%%%%%%%%%%%%%%%%%

\section{Quantum Circuit Implementation}
\label{sec:circuit}

\subsection{Preparation of the ancillary state \texorpdfstring{$|r_h\rangle$}{|r\_h}} 
\label{sec:prep-rh}

We now give an explicit construction of the normalized ancillary state
\begin{equation}
|r_h\rangle
= C_{M,\theta}^{-1}\sum_{i=0}^{M}\Big(\tfrac{i}{M}\Big)^{\beta}\,|i\rangle,
\qquad
\beta=\tfrac{1}{\theta}-\tfrac{1}{2},\qquad
C_{M,\theta}=\Big(\sum_{i=0}^{M} (i/M)^{2\beta}\Big)^{1/2},
\end{equation}
where $M+1=2^m$ for some $m\in\mathbb{N}$. Our method utilizes the combination of LCU and QSVT. We first implement a block-encoding of the diagonal operator
\[
\hat H_{\mathrm{init}}:=\sum_{i=0}^{M}\tfrac{i}{M}\,|i\rangle\!\langle i|,
\]
and then apply a QSVT sequence to perform the monomial singular value transformation $f(x)=x^\beta$. Finally, postselection on the ancilla registers produces $|r_h\rangle$.

Using the binary representation $i=\sum_{k=0}^{m-1}2^k b_k$ and the relation $Z_k|b_k\rangle=(-1)^{b_k}|b_k\rangle$, the operator can be written in Pauli operators as
\begin{equation}
\hat H_{\mathrm{init}}=\underbrace{\frac{1}{2}}_{\omega_0}I-\sum_{k=0}^{m-1}\underbrace{\frac{2^k}{2M}}_{\omega_{k+1}}Z_k.
\label{eq:H_init}
\end{equation}
Let $a:=\lceil \log_2(m+1)\rceil$ and define nonnegative weights $\omega_0=\tfrac{1}{2}$ and $\omega_{k+1}=\frac{2^k}{2M}$ for $0\le k\le m-1$, so that $\sum_{j=0}^m \omega_j=1$. With the $a$-qubit ancilla register we prepare 
\(
\mathrm{Prep}_\mathrm{init}\,|0\rangle^{\otimes a}=\sum_{j=0}^m \sqrt{\omega_j}\,|j\rangle
\)
and apply
\(
\mathrm{Select}_\mathrm{init}=\sum_{j=0}^m |j\rangle\!\langle j|\otimes U_j
\)
with $U_0=I$ and $U_{k+1}=-Z_k$ for $0 \le k \le m-1$. The unitary operator
\(
U_{\mathrm{init}}=(\mathrm{Prep}_\mathrm{init}^\dagger\otimes I)\,\mathrm{Select}_\mathrm{init}\,(\mathrm{Prep}_\mathrm{init}\otimes I)
\)
is then a $(1,a,0)$ block encoding of $\hat H_{\mathrm{init}}$. Since all amplitudes $\sqrt{\omega_j}$ are real and nonnegative, $\mathrm{Prep}_{\mathrm{init}}$ uses only $R_Y$ rotations; with the M\"ott\"onen scheme it requires exactly $(2^a-2)$ CNOTs and $(2^a-1)$ $R_Y$ rotations, which scales as $O(\log M)$ since $2^{a-1}<m+1\le 2^a$. The explicit decomposition for $a=3$ is shown in Fig.~\ref{fig:prep-mottonen}.

\begin{figure}[t]
\centering
\scalebox{0.73}{
    \begin{quantikz}
        & \gate[3]{\mathrm{Prep}} & \midstick[3,brackets=none]{=} & \gate{R_Y(\theta_1)} \slice[style=black]{} & & \ctrl{1} & & \ctrl{1} \slice[style=black]{} &&&& \ctrl{2} &&&& \ctrl{2} & \\
        &                         &                                &                         & \gate{R_Y(\theta_2)} & \targ{} & \gate{R_Y(\theta_3)} & \targ{} && \ctrl{1} &&&& \ctrl{1} && & \\
        &                         &                                &                         &                     &        &                      &         & \gate{R_Y(\theta_4)} & \targ{} & \gate{R_Y(\theta_5)} & \targ{} & \gate{R_Y(\theta_6)} & \targ{} & \gate{R_Y(\theta_7)} & \targ{} &
    \end{quantikz}
}
\caption{\textbf{Decomposition of $\mathrm{Prep}_{\mathrm{init}}$ for $a=3$.}  
Since all amplitudes are real and nonnegative, $\mathrm{Prep}_{\mathrm{init}}$ uses only $R_Y$ rotations.  
Gate counts: $2^a-2=6$ CNOT gates and $2^a-1=7$ $R_Y$ gates.}
\label{fig:prep-mottonen}
\end{figure}
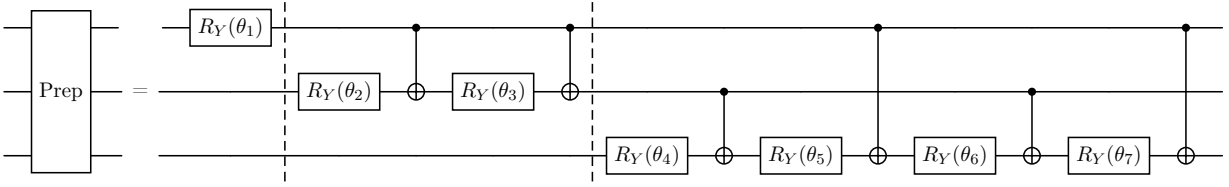

The $\mathrm{Select}_\mathrm{init}$ unitary is a bank of multi-controlled operations conditioned on the $a$-qubit ancilla register; in particular, there are $m$ instances of $a$-bit controlled $U_{k+1}=-Z_k$ gates, with controls on the $a$-qubit ancilla register and targets on the register where $\ket{r_h}$ is implemented. According to~\cite{khattar2025rise}, each $a$-bit Toffoli can be decomposed into $(2a-3)$ standard Toffoli gates with one additional ancilla qubit, so the overall Toffoli count for $\mathrm{Select}_\mathrm{init}$ is $m(2a-3)$. The full block encoding structure is shown in Fig.~\ref{fig:lcu-hinit}.

\begin{figure}[t]
\centering
\scalebox{0.7}{
    \begin{quantikz}
        \lstick[3]{$\ket{0}^{\otimes a}$} & \gate[3]{\mathrm{Prep}_\mathrm{init}} & \gategroup[10,steps=10,style={dashed,rounded
        corners,fill=blue!12, inner xsep=2pt},background,label style={label position=above,anchor=north,yshift=-0.2cm}]{\(\mathrm{Select}_\mathrm{init}=\sum_j |j\rangle\!\langle j|\otimes U_j\)} & \ctrl[open]{2} & & \ctrl{3}&\ctrl[open]{4}&\ctrl{5}&\ctrl[open]{6}&\ctrl{7}&\ctrl[open]{8}&\ctrl{9}&\gate[3]{\mathrm{Prep}_\mathrm{init}^\dagger} &\\
        & & & \ocontrol{} & & \ocontrol{}&\control{}&\control{}&\ocontrol{}&\ocontrol{}&\control{}&\control{} &&\\
        & & \gate{X} & \gate{Z} & \gate{X} & \ocontrol{}&\ocontrol{}&\ocontrol{}&\control{}&\control{}&\control{}&\control{} &&\\
        \lstick[7]{$\ket{+}^{\otimes m}$}&&&&& \gate{Z} & & & & & & &&\\
        &&&&& & \gate{Z} & & & & & &&\\
        &&&&& & & \gate{Z} & & & & &&\\
        &&&&& & & & \gate{Z} & & & &&\\
        &&&&& & & & & \gate{Z} & & &&\\
        &&&&& & & & & & \gate{Z} & &&\\
        &&&&& & & & & & & \gate{Z} &&
    \end{quantikz}
}
\caption{\textbf{Block encoding of $\hat H_{\mathrm{init}}$ using LCU.}  
$\mathrm{Prep}_\mathrm{init}$ makes $\sum_j\sqrt{\omega_j}|j\rangle$, while $\mathrm{Select}_\mathrm{init}$ applies controlled $\{I,-Z_0,-Z_1,\dots,-Z_{m-1}\}$ up to global phase. The construction $(\mathrm{Prep}_\mathrm{init}^\dagger\!\otimes I)\,\mathrm{Select}_\mathrm{init}\,(\mathrm{Prep}_\mathrm{init}\otimes I)$ realizes a $(1,a,0)$ block-encoding of $\hat H_{\mathrm{init}}$.}
\label{fig:lcu-hinit}
\end{figure}
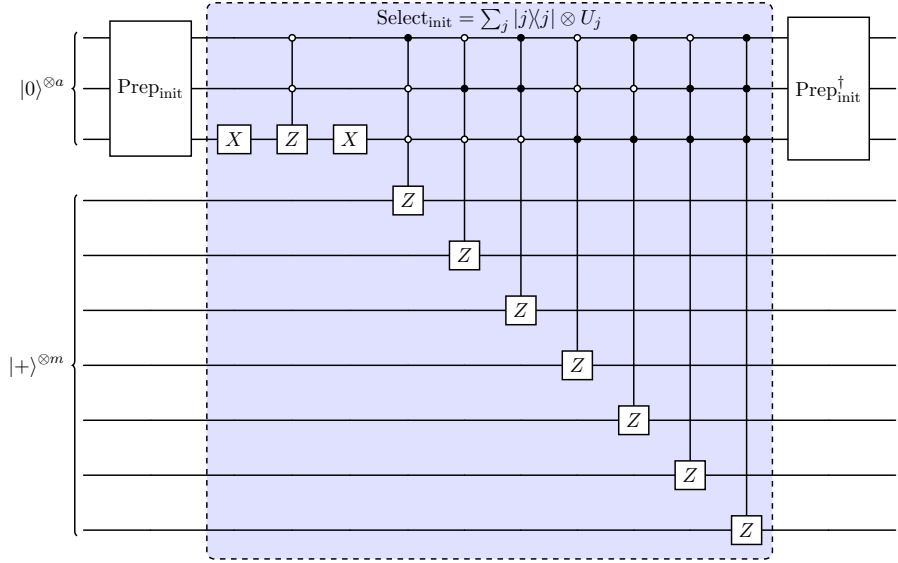

To encode the monomial factor $(i/M)^\beta$ into the computational basis, we take $U_{\mathrm{init}}$ as the signal unitary for QSVT and implement the monomial $f(x)=x^\beta$ of degree $\beta$. The sequence consists of exactly $\beta$ alternating applications of $U_{\mathrm{init}}$ and $U_{\mathrm{init}}^\dagger$, interleaved with single-qubit $R_Z(2\phi_j)$ rotations on the signal qubit, yielding
\[
(\langle 0|^{\otimes a}\otimes I)\,U_{\mathrm{init},\Phi}\,(|0\rangle^{\otimes a}\otimes I)
=\sum_{i=0}^{M}\Big(\tfrac{i}{M}\Big)^{\beta}|i\rangle\!\langle i|.
\]
The structure of the QSVT sequence appears in Fig.~\ref{fig:qsvt-rh}. Here $\Pi=|0\rangle\!\langle 0|^{\otimes a}$, and in the diagram the $C_{\Pi}\mathrm{NOT}$ gate denotes the controlled operation $X\otimes\Pi + I\otimes(I-\Pi)$. Each $C_{\Pi}\mathrm{NOT}$ gate can be realized using an $a$-bit Toffoli gate and $X$ gates. The QSVT construction requires $a$ ancilla qubits for block encoding, plus one extra ancilla qubit for the signal processing register. Thus, the total number of ancilla qubits is $a+1$. If the $a$-bit Toffoli gates are decomposed into standard Toffolis, an additional ancilla qubit is required. Successful postselection requires that all of them be measured in $|0\rangle$ simultaneously in order to obtain the desired $|r_h\rangle$ state on the $m$-qubit register. The QSVT sequence also entails $2\beta$ $a$-bit Toffoli gates and $\beta$ single-qubit $R_Z$ gates.

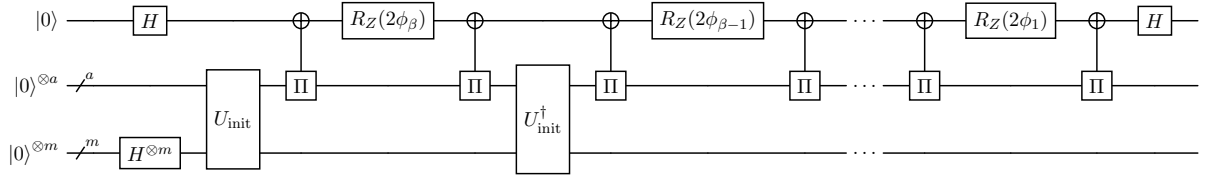
\begin{figure}[t]
\centering
\scalebox{0.7}{
    \begin{quantikz}
        \lstick[1]{\ket{0}} & & \gate{H} & & \targ{}\wire[d][1]{q} & \gate{R_Z(2\phi_\beta)} & \targ{}\wire[d][1]{q} & & \targ{}\wire[d][1]{q} & \gate{R_Z(2\phi_{\beta-1})} & \targ{}\wire[d][1]{q} & \ \ldots\ & \targ{}\wire[d][1]{q} & \gate{R_Z(2\phi_1)} & \targ{}\wire[d][1]{q} & \gate{H} & \\
        \lstick[1]{$\ket{0}^{\otimes a}$} & \qwbundle{a} & & \gate[2]{U_{\mathrm{init}}} & \gate{\Pi}   & & \gate{\Pi} & \gate[2]{U_{\mathrm{init}}^\dagger} & \gate{\Pi} & & \gate{\Pi} & \ \ldots\ & \gate{\Pi} & & \gate{\Pi} &&\\
        \lstick[1]{$\ket{0}^{\otimes m}$} & \qwbundle{m} & \gate{H^{\otimes m}} & & & & & & & & & \ \ldots\ & &&&&\\
    \end{quantikz}
}
\caption{\textbf{QSVT sequence for the monomial $f(x)=x^\beta$.}  
Here $\Pi=|0\rangle\!\langle 0|^{\otimes a}$. In the diagram, the $C_{\Pi}\mathrm{NOT}$ gate denotes the controlled operation $X\otimes\Pi + I\otimes(I-\Pi)$. The interleaving phases $\{\phi_j\}$ implement the degree-$\beta$ monomial transform.}
\label{fig:qsvt-rh}
\end{figure}

Applied to the uniform superposition $\ket{+}^{\otimes m}$, the overall circuit produces
\[
U_{\mathrm{init},\Phi}(|0\rangle^{\otimes a}\otimes |+\rangle^{\otimes m})
=|0\rangle^{\otimes a}\frac{1}{\sqrt{M+1}}\sum_{i=0}^M \Big(\tfrac{i}{M}\Big)^\beta |i\rangle+\ket{\perp},
\]
where $U_{\mathrm{init},\Phi}$ denotes the QSVT sequence. Thus, postselecting all ancilla qubits in the state $|0\rangle$ yields $|r_h\rangle$ with success probability
\[
P_{\mathrm{succ}}=\frac{1}{M+1}\sum_{i=0}^M\Big(\tfrac{i}{M}\Big)^{2\beta}
\ge \frac{M}{M+1}\cdot\frac{1}{2\beta+1}.
\]
Hence, the expected number of trials is $O(\beta)$, which can be reduced to $O(\sqrt{\beta})$ using amplitude amplification.

\begin{algorithm}[t]
\caption{Preparation of the ancillary state $|r_h\rangle$ using LCU and QSVT}
\label{alg:prep-rh}
\begin{algorithmic}[1]
\State \textbf{Inputs:} $M+1=2^m$, $\beta\in\mathbb{N}$, $a=\lceil\log_2(m+1)\rceil$.
\State \textbf{Initialize} the $a$-qubit ancilla register in $|0\rangle^{\otimes a}$ and the data register in $|+\rangle^{\otimes m}$.
\State \textbf{Block-encode} $\hat H_{\mathrm{init}}=\sum_i (i/M)|i\rangle\!\langle i|$ using $\mathrm{Prep}_\mathrm{init}$ and $\mathrm{Select}_\mathrm{init}$:
\Statex \hspace{1.2em} (i) Prepare $\sum_j\sqrt{\omega_j}|j\rangle$ on $a$-qubit register with $\mathrm{Prep}_\mathrm{init}$. The weights $\{\omega_j\}_{j=0}^m$ are given in Eq.~\eqref{eq:H_init}.
\Statex \hspace{1.2em} (ii) Apply $\mathrm{Select}_\mathrm{init}=\sum_{j=0}^m |j\rangle\!\langle j|\otimes U_j$ where $U_0=I$ and $U_{k+1}=-Z_k$ for $0\le k \le m-1$.
\Statex \hspace{1.2em} (iii) Unprepare with $\mathrm{Prep}_\mathrm{init}^\dagger$ to obtain $U_{\mathrm{init}}$.
\State \textbf{Apply QSVT:} Add one extra ancilla as the \emph{signal processing register}, and apply a degree-$\beta$ sequence alternating $U_{\mathrm{init}}$ and $U_{\mathrm{init}}^\dagger$ with phases $\{\phi_j\}_{j=1}^\beta$ to realize a block encoding of $\sum_i (i/M)^\beta |i\rangle\!\langle i|$.
\State \textbf{Postselect} all $(a+1)$ ancillas in $|0\rangle$ to obtain $|r_h\rangle$ on the $m$-qubit register. We can use amplitude amplification to boost the postselection success probability.
\State \textbf{Resources:} $\mathrm{Prep}_\mathrm{init}$ uses $(2^a-2)$ CNOTs and $(2^a-1)$ $R_Y$; 
$\mathrm{Select}_\mathrm{init}$ uses $m$ $a$-bit Toffolis; 
QSVT uses $\beta$ calls to $U_{\mathrm{init}}^{(\dagger)}$, $2\beta$ $a$-bit Toffolis for reflections, and $\beta$ single-qubit $R_Z$ gates, with one extra signal-processing ancilla.  
\State \textbf{Complexity:} Accounting for the decomposition of $a$-bit Toffoli gates into standard Toffolis, the per-attempt cost is $O(\beta\log M\cdot\log\log M)$. With postselection success probability $O(1/\beta)$, the total cost is $O(\beta^2\log M\cdot\log\log M)$ (or $O(\beta^{3/2}\log M\cdot\log\log M)$ using amplitude amplification).
\end{algorithmic}
\end{algorithm}

\begin{table}[t]
\centering
\caption{\textbf{Example of QSVT phase angles for the monomial $f(x)=x^\beta$.}
The values $\{\phi_j\}_{j=1}^\beta$ (in radians) correspond to the single-qubit rotation phases used in the QSVT circuit of Fig.~\ref{fig:qsvt-rh}.}
\label{tab:qsvt-angles-monomial}
\begin{tabular}{@{}c l@{}}
\toprule
$\beta$ & $\{\phi_j\}_{j=1}^{\beta}$ (ascending order) \\
\midrule
3 &
$\{-1.945530537814129,\ -2.1688268601597227,\ -2.1688268601597227\}$ \\
4 &
$\{-0.17915969502442763,\ -1.9634951462137356,\ -2.1770342706081474,$\\
& $\ -1.9634951462137356\}$ \\
5 &
$\{1.4843149138525842,\ -1.8078352881528696,\ -2.0759142978060185,$ \\
& $\ -2.0759142978060185,\ -1.8078352881528696\}$ \\
6 &
$\{3.099514146455192,\ -1.7077184397821685,\ -1.9424558926637125,$ \\
& $\ -2.0823497396925856,\ -1.9424558926637125,\ -1.7077184397821685\}$ \\
7 &
$\{-1.5913870208780079,\ -1.648016853210964,\ -1.8228649318945727,$ \\
& $\ -2.0166094870933566,\ -2.0166094870933566,\ -1.8228649318945727,\ -1.648016853210964\}$ \\
\bottomrule
\end{tabular}
\vspace{0.5em}

\begin{minipage}{0.95\linewidth}
\textit{Note.} 
The phases were obtained by numerically solving with QSPPACK (MATLAB)~\cite{dong2021efficient}, with parameters chosen such that the synthesized polynomial \(p_\Phi(x)\) satisfies
\[
\sup_{x\in[-1,1]} |p_\Phi(x)-x^\beta| \le 10^{-12}.
\]
Since QSPPACK outputs phases for QSP, an additional postprocessing step was applied to adapt them to the structure required by QSVT.
\end{minipage}
\end{table}

\subsection{Block-encoding of \texorpdfstring{$\theta F_h$}{theta Fh} and \texorpdfstring{$I\otimes H + i\theta F_h\otimes K$}{IH + i theta Fh K}}
\label{sec:block-encode-Fh-and-assemble}

We first construct a block encoding of the ancillary operator $\theta F_h$, following the approach of Sec.~\ref{sec:prep-rh}. Assuming that block encodings of $H$ and $K$ are available, we then assemble these ingredients to obtain a block encoding of the operator $I\otimes H + i\theta F_h\otimes K$.

Recall the tridiagonal skew-Hermitian stencil $F_h$ from~\eqref{eq:Fh}. Using the diagonal operator
\begin{equation}
D=\theta\,(2M\,\hat H_{\mathrm{init}}+I)=\theta\cdot\mathrm{diag}(1,3,5,\dots,2M+1),
\qquad
\hat H_{\mathrm{init}}=\sum_{i=0}^M \frac{i}{M}\,|i\rangle\!\langle i|,
\end{equation}
and the right-shift operator $R=\sum_{i=0}^{M-1}|i\rangle\!\langle i+1|$, one can verify
\begin{equation}
\theta F_h=\frac{1}{4}\big(DR-R^\dagger D\big).
\label{eq:Fh-comm}
\end{equation}
Hence it suffices to block-encode $D$ and $R$ and then combine them.

\paragraph*{Block encoding of $D$.}
Let $m=\log_2(M+1)$ and $a=\lceil\log_2(m+1)\rceil$ so that $2^{a-1}<m+1\le 2^a$. Write
\begin{equation}
D \;=\; \theta(M\!+\!1)\,I - \sum_{k=0}^{m-1} \theta\,2^k\,Z_k
\;=\; \alpha_D\!\Bigg(\underbrace{\frac{\theta(M+1)}{\alpha_D}}_{\omega'_0} I
- \sum_{k=0}^{m-1} \underbrace{\frac{\theta 2^k}{\alpha_D}}_{\omega'_{k+1}} Z_k\Bigg),
\qquad
\alpha_D=\theta(2M+1).
\label{eq:D}
\end{equation}
Define the unitaries $U_0=I$ and $U_{k+1}=-Z_k$ for $0\le k\le m-1$, and let \(\{\omega'_j\}_{j=0}^m\) be the nonnegative weights defined above, which sum to~1. Prepare the $a$-qubit register with $\mathrm{Prep}_D\,|0\rangle^{\otimes a}=\sum_{j=0}^m \sqrt{\omega'_j}\,|j\rangle,$ and apply $\mathrm{Select}_D=\sum_{j=0}^m |j\rangle\!\langle j|\otimes U_j.$ Then the unitary operator
\begin{equation}
U_D=(\mathrm{Prep}_D^\dagger\!\otimes I)\,\mathrm{Select}_D\,(\mathrm{Prep}_D\!\otimes I)
=\begin{bmatrix} D/\alpha_D & * \\ * & * \end{bmatrix}
\end{equation}
is an $(\alpha_D,a,0)$ block encoding of $D$. As in Sec.~\ref{sec:prep-rh}, all amplitudes are real and nonnegative, so $\mathrm{Prep}_D$ uses only $R_Y$ rotations and can be realized by the Möttönen scheme with exactly $(2^a-2)$ CNOTs and $(2^a-1)$ $R_Y$ gates, which scales as $O(\log M)$ since $a=O(\log\log M)$. The $\mathrm{Select}_D$ comprises $m$ instances of $a$-bit controlled $U_{k+1}=-Z_k$ gates. As mentioned in Sec.~\ref{sec:prep-rh}, each $a$-bit Toffoli decomposes into $(2a-3)$ Toffolis with one additional ancilla qubit, hence the Toffoli count is $m(2a-3)$~\cite{khattar2025rise}.

\paragraph*{Block encoding of $R$ using QFT-adder.}
We can implement the right shift operator $R$ via a QFT-adder~\cite{draper2000addition}. Let $N=2^{m+1}$ and consider the $(m\!+\!1)$-qubit QFT $F_N$. Define
\begin{equation}
U_R \;=\; F_N^\dagger\,\mathrm{diag}(\omega^0,\omega^{-1},\dots,\omega^{-(N-1)})\,F_N,
\qquad
\omega=e^{2\pi i/N}.
\end{equation}
Since 
\(
\mathrm{diag}(\omega^0,\omega^{-1},\dots,\omega^{-(N-1)})
= \bigotimes_{k=0}^{m} R_Z(-\pi/2^{k}),
\)
this diagonal operator can be implemented with $(m\!+\!1)$ parallel single-qubit $R_Z$ gates. A textbook QFT on $m\!+\!1$ qubits requires $\frac{m(m+1)}{2}$ controlled-phase gates together with $(m\!+\!1)$ Hadamard gates. Acting on $\mathbb{C}^{2^{m+1}}$, $U_R$ implements the cyclic shift $|i\rangle\mapsto |(i+1)\bmod 2^{m+1}\rangle$. When restricted to the $m$-qubit subspace, $U_R$ realizes a $(1,1,0)$ block encoding of
\(
R=\sum_{i=0}^{M-1} |i\rangle\!\langle i+1|,
\)
as shown in Fig.~\ref{fig:block-encoding-R}. In total, the $R$ block encoding requires one ancilla qubit, $O((\log M)^2)$ two-qubit gates, and $O(\log M)$ single-qubit gates.

\begin{figure}[t]
    \centering
    \begin{subfigure}{0.4\textwidth}
        \centering
        \scalebox{0.7}{
            \begin{quantikz}
            \lstick[4]{$m$ qubits}& & \gate[5]{\mathrm{QFT}} & \gate{R_Z\left(-\frac{\pi}{2^m}\right)} & \gate[5]{\mathrm{QFT}^\dagger}&&\\
            \setwiretype{n} & \vdots & & \vdots & & \vdots &\\
            & & & \gate{R_Z\left(-\frac{\pi}{4}\right)} & &&\\
            & & & \gate{R_Z\left(-\frac{\pi}{2}\right)} & &&\\
            \lstick[1]{ancilla \ket{0}}& & & \gate{R_Z\left(-\pi\right)} & &&
            \end{quantikz}
        }
        \caption{}
        \label{fig:block-encoding-R}
    \end{subfigure}
    \hfill
    \begin{subfigure}{0.56\textwidth}
        \centering
        \scalebox{0.7}{
            \begin{quantikz}
            & \qwbundle{m} &&& \gate[2]{U_R} & \gate[3]{U_D} & & & \gate[3]{U_D} & \gate[2]{U^\dagger_R} & &\\
            \lstick[5]{$\ket{0}^{\otimes(2a+3)}$}& \qwbundle{1} &&& & & \swap{2} & & & & &\\
            & \qwbundle{a} &&& & & & \swap{2} & & & &\\
            & \qwbundle{1} &\ghost{Z}&& & & \targX{} & & & & &\\
            & \qwbundle{a} &\ghost{Z}&& & & & \targX{} &  & & &\\
            & \qwbundle{1} & \gate{H} & \gate{Z} & \ctrl[open]{-5} & \ctrl[open]{-4} & & & \ctrl{-4} & \ctrl{-5} & \gate{H} &\\
            \end{quantikz}
        }
        \caption{}
        \label{fig:block-encoding-Fh}
    \end{subfigure}
    \caption{\textbf{Block encoding of $R$ via a QFT adder and block encoding of $\theta F_h$.} 
    (a) The $(m\!+\!1)$-qubit QFT layer $F_{2^{m+1}}$ together with the parallel phase gates $\bigotimes_{k=0}^{m} R_Z(-\pi/2^{k})$ realizes the cyclic shift. This structure gives a $(1,1,0)$ block encoding of $R=\sum_{i=0}^{M-1}|i\rangle\!\langle i+1|$. 
    (b) The LCU structure produces $\tfrac{1}{2}(U_DU_R-U_R^\dagger U_D)$ on the computational block, yielding an $(\alpha_{\theta F}, a_{\theta F}, 0)$ block encoding of $\theta F_h$ with $\alpha_{\theta F}=\alpha_D/2=\tfrac{\theta}{2}(2M+1)$ and $a_{\theta F}=2a+3=O(\log\log M)$.}
\end{figure}
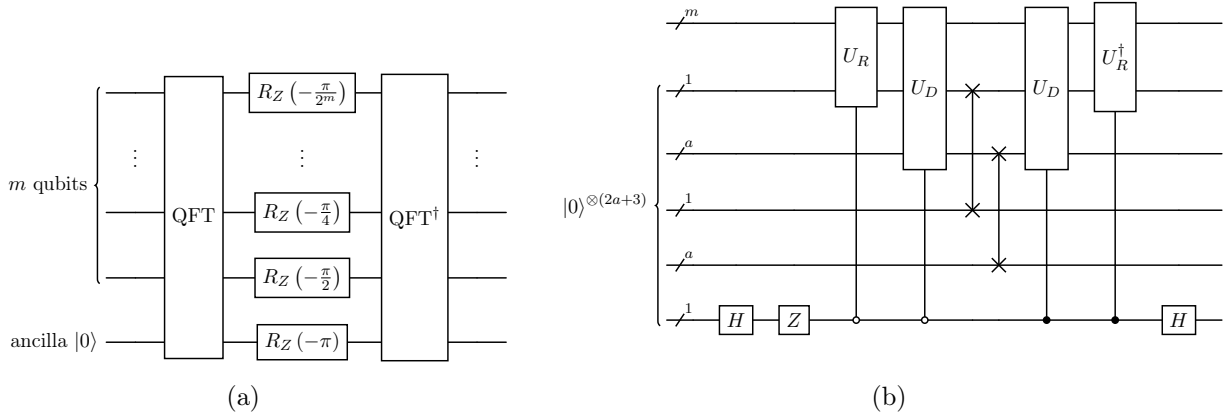

\paragraph*{Combining $U_D$ and $U_R$.}
We now combine $U_D$ and $U_R$ to realize $\theta F_h$ as given in~\eqref{eq:Fh-comm}. Consider the Hadamard gate on a single control qubit that prepares the equal superposition of two branches and implements
\begin{equation}
\frac{1}{2}\Big(U_D U_R \;-\; U_R^\dagger U_D\Big),
\end{equation}
with relative signs enforced by inserting a $Z$ gate on the control as shown in Fig.~\ref{fig:block-encoding-Fh}. Since $U_D$ block-encodes $D/\alpha_D$ and $U_R$ block-encodes $R$, the top-left block of the above unitary equals
\[
\frac{1}{2}\Big(DR/\alpha_D \;-\; R^\dagger D/\alpha_D\Big)
= \frac{1}{2\alpha_D}\,(DR-R^\dagger D).
\]
Comparing with $\theta F_h=\frac{1}{4}(DR-R^\dagger D)$ shows that the resulting block-encoding has normalization
\begin{equation}
\alpha_{\theta F}=\frac{\alpha_D}{2}=\frac{\theta}{2}(2M+1),
\qquad
a_{\theta F}=2a+3=O(\log\log M).
\end{equation}
The controlled versions of $U_D$ and $U_R$ required by the LCU branching can be obtained by adding one more control line to their primitive gates such as single-qubit rotations, CNOTs, Toffolis, and controlled phase operations in QFT. Using the multi-controlled gate decompositions of~\cite{khattar2025rise}, the resulting controlled circuits incur only a constant-factor overhead in depth compared to their uncontrolled versions.

\begin{algorithm}[t]
\caption{Block encoding of $\theta F_h$ using $D$ and $R$}
\label{alg:block-encode-thetaFh}
\begin{algorithmic}[1]
\State \textbf{Inputs:} $M+1=2^m$, $a=\lceil\log_2(m+1)\rceil$.
\State \textbf{Implementation of $U_D$:} Prepare $\sum_{j=0}^m \sqrt{\omega'_j}\,|j\rangle$ on an $a$-qubit register (\(\mathrm{Prep}_D\)) with \(\{\omega'_j \}_{j=0}^m\) in~\eqref{eq:D}; apply \(\mathrm{Select}_D=\sum_j |j\rangle\!\langle j|\otimes U_j\) with $U_0=I$, $U_{k+1}=-Z_k$; unprepare with $\mathrm{Prep}_D^\dagger$ to obtain a $(\alpha_D,a,0)$ block-encoding of $D$.
\State \textbf{Implementation of $U_R$:} Implement the cyclic shift $U_R=F_{2^{m+1}}^\dagger \big(\bigotimes_{k=0}^{m} R_Z(-\pi/2^k)\big) F_{2^{m+1}}$, which is a $(1,1,0)$ block-encoding of $R$.
\State \textbf{Linear combination:} Use the LCU method to realize $\displaystyle \frac{1}{2}\big(U_DU_R - U_R^\dagger U_D\big) \;\propto\; \frac{1}{4}\big(DR - R^\dagger D\big)$ on the system, giving a $(\alpha_{\theta F},\,2a{+}3,\,0)$ block-encoding of $\theta F_h$ with $\alpha_{\theta F}=\frac{\alpha_D}{2}=\frac{\theta}{2}(2M{+}1)$.
\State \textbf{Output:} $U_{\theta F}$, a $(\alpha_{\theta F},2a{+}3,0)$ block-encoding of $\theta F_h$.
\State \textbf{Resources:}
\Statex \ \ Ancillas: $2a+3=O(\log\log M)$ in total (When the decomposition of multi-controlled gates is taken into account, a few additional ancilla qubits are required).
\Statex \ \ Gate complexity: two uses of $U_D$ and two uses of $U_R$.
\Statex \ \ \hspace{1.75em} $U_D$: $O(\log M\cdot\log\log M)$ gates; \quad $U_R$: $O((\log M)^2)$ gates.
\Statex \ \ Overall: $O\big((\log M)^2\big)$ gates to realize $U_{\theta F}$.
\end{algorithmic}
\end{algorithm}

We now assemble the total operator, assuming block encodings of $H$, $K$, and $\theta F_h$ are available. Let $U_H$ denote an $(\alpha_H,a_H,\epsilon_H)$ block encoding of $H$, $U_K$ denote an $(\alpha_K,a_K,\epsilon_K)$ block encoding of $K$, and $U_{\theta F}$ denote an $(\alpha_{\theta F},a_{\theta F},\epsilon_{\theta F})$ block encoding of $\theta F_h$.

That is,
\[
(\langle 0|^{\otimes a_H}\!\otimes I)\,U_H\,(|0\rangle^{\otimes a_H}\!\otimes I)=\tfrac{H}{\alpha_H}+E_H,\quad
(\langle 0|^{\otimes a_K}\!\otimes I)\,U_K\,(|0\rangle^{\otimes a_K}\!\otimes I)=\tfrac{K}{\alpha_K}+E_K,
\]
\[
(\langle 0|^{\otimes a_{\theta F}}\!\otimes I)\,U_{\theta F}\,(|0\rangle^{\otimes a_{\theta F}}\!\otimes I)=\tfrac{\theta F_h}{\alpha_{\theta F}}+E_{\theta F},
\quad
\|E_H\|\le\epsilon_H,\ \|E_K\|\le\epsilon_K,\ \|E_{\theta F}\|\le\epsilon_{\theta F}.
\]

Consider \(U_{\theta F\otimes K}:=U_{\theta F}\otimes U_K\). A direct calculation shows that
\[
\big((\langle 0|^{\otimes a_{\theta F}}\!\otimes I)\otimes(\langle 0|^{\otimes a_K}\!\otimes I)\big)\,U_{\theta F\otimes K}\,\big((|0\rangle^{\otimes a_{\theta F}}\!\otimes I)\otimes(|0\rangle^{\otimes a_K}\!\otimes I)\big)
=\frac{\theta F_h\otimes K}{\alpha_{\theta F}\alpha_K}+E_{\theta F\otimes K},
\]
where
\[
E_{\theta F\otimes K}
:= \tfrac{\theta F_h}{\alpha_{\theta F}}\!\otimes E_K\;+\;E_{\theta F}\!\otimes\tfrac{K}{\alpha_K}\;+\;E_{\theta F}\!\otimes E_K,
\qquad
\therefore \; \|E_{\theta F\otimes K}\|\le \epsilon_{\theta F}+\epsilon_K+\epsilon_{\theta F}\epsilon_K=:\epsilon_{\theta F\otimes K}.
\]
Thus, \(U_{\theta F\otimes K}\) is an \((\alpha_{\theta F}\alpha_K,\ a_{\theta F}+a_K,\ \epsilon_{\theta F\otimes K})\) block encoding of \(\theta F_h\otimes K\).

Now we can use the LCU method to build $I\!\otimes\!H + i\,\theta F_h\!\otimes\!K$. Introduce a single-qubit operator
\[
\mathrm{Prep}_{\mathrm{tot}}
=R_Y\!\left(2\arctan\sqrt{\tfrac{\alpha_{\theta F}\alpha_K}{\alpha_H}}\right),
\quad
\mathrm{Prep}_{\mathrm{tot}}|0\rangle
=\frac{\sqrt{\alpha_H}\,|0\rangle+\sqrt{\alpha_{\theta F}\alpha_K}\,|1\rangle}{\sqrt{\alpha_H+\alpha_{\theta F}\alpha_K}},
\]
and define the select unitary
\[
\mathrm{Select}_{\mathrm{tot}}
=|0\rangle\!\langle 0|\otimes (I\!\otimes\!U_H)\;+\;|1\rangle\!\langle 1|\otimes \big(i\,U_{\theta F\otimes K}\big),
\]
where the phase factor \(i\) is implemented by an \(S\) gate to the control qubit.
Set
\[
U_{\mathrm{tot}}:=(\mathrm{Prep}_{\mathrm{tot}}^\dagger\!\otimes I)\ \mathrm{Select}_{\mathrm{tot}}\ (\mathrm{Prep}_{\mathrm{tot}}\!\otimes I).
\]
Projecting the joint ancilla onto \(|0\rangle^{\otimes(1+a_H+a_{\theta F}+a_K)}\) yields the block
\[
\frac{\alpha_H}{\alpha_H+\alpha_{\theta F}\alpha_K}\left(\tfrac{I\otimes H}{\alpha_H}+E_H\right)
\;+\;
\frac{\alpha_{\theta F}\alpha_K}{\alpha_H+\alpha_{\theta F}\alpha_K}\left(\tfrac{i\,\theta F_h\otimes K}{\alpha_{\theta F}\alpha_K}+i\,E_{\theta F\otimes K}\right),
\]
which equals
\[
\frac{I\otimes H+i\,\theta F_h\otimes K}{\alpha_H+\alpha_{\theta F}\alpha_K}
\;+\;
E_{\mathrm{tot}},
\qquad
\|E_{\mathrm{tot}}\|
\le
\frac{\alpha_H}{\alpha_H+\alpha_{\theta F}\alpha_K}\,\epsilon_H
+
\frac{\alpha_{\theta F}\alpha_K}{\alpha_H+\alpha_{\theta F}\alpha_K}\,\epsilon_{\theta F\otimes K}=:\epsilon_{\mathrm{tot}}.
\]
Therefore, \(U_{\mathrm{tot}}\) is an \(\big(\alpha_H+\alpha_{\theta F}\alpha_K,\ 1+a_H+a_{\theta F}+a_K,\ \epsilon_{\mathrm{tot}}\big)\) block encoding of \(I\!\otimes\!H+i\,\theta F_h\!\otimes\!K\).

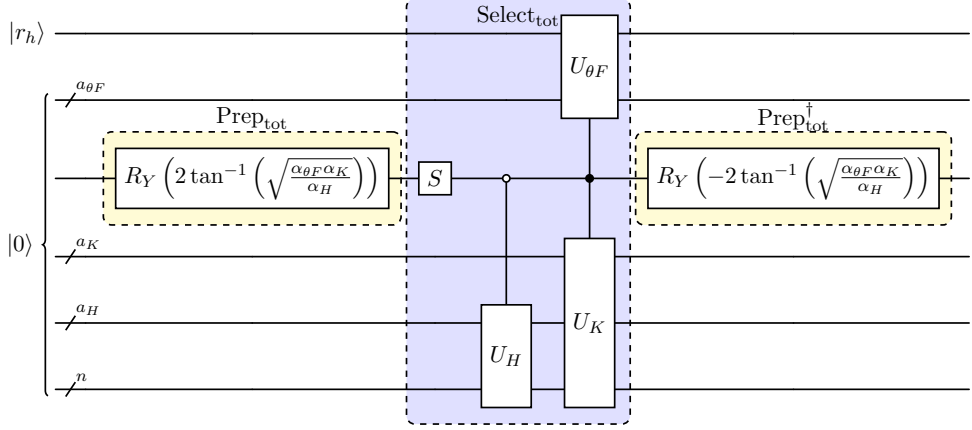
\begin{figure}[t]
    \begin{center}
    \scalebox{0.8}{
        \begin{quantikz}
        \lstick[1]{\ket{r_h}} & & & \gategroup[6,steps=3,style={dashed,rounded corners,fill=blue!12, inner xsep=2pt},background,label style={label position=above,anchor=north,yshift=-0.2cm}]{$\mathrm{Select}_{\mathrm{tot}}$} & & \gate[2]{U_{\theta F}} & &\\
        \lstick[5]{\ket{0}} & \qwbundle{a_{\theta F}} & & & & & &\\
        & & \gate{R_Y\left(2\tan^{-1}\left(\sqrt{\frac{\alpha_{\theta F} \alpha_K}{\alpha_H}}\right)\right)} \gategroup[1,steps=1,style={dashed,rounded corners,fill=yellow!20, inner xsep=2pt},background,label style={label position=above,anchor=north,yshift=0.33cm}]{$\mathrm{Prep}_{\mathrm{tot}}$} & \gate{S} & \ctrl[open]{2} & \ctrl{-1}\wire[d][1]{q} & \gate{R_Y\left(-2\tan^{-1}\left(\sqrt{\frac{\alpha_{\theta F} \alpha_K}{\alpha_H}}\right)\right)} \gategroup[1,steps=1,style={dashed,rounded corners,fill=yellow!20, inner xsep=2pt},background,label style={label position=above,anchor=north,yshift=0.42cm}]{$\mathrm{Prep}_{\mathrm{tot}}^\dagger$} &\\
        & \qwbundle{a_K} & & & & \gate[3]{U_K} & &\\
        & \qwbundle{a_H} & & & \gate[2]{U_H} & & &\\
        & \qwbundle{n} & & & & & &
        \end{quantikz}
        }
    \end{center}
\caption{\textbf{Block encoding of \(I\!\otimes\!H + i\,\theta F_h\!\otimes\!K\).}
An \(R_Y\) gate prepares amplitudes proportional to \(\sqrt{\alpha_H}\) and \(\sqrt{\alpha_{\theta F}\alpha_K}\); the branch unitary applies \(I\!\otimes\!U_H\) on \(|0\rangle\) and \(i\,U_{\theta F}\!\otimes\!U_K\) on \(|1\rangle\).
The overall unitary \(U_{\mathrm{tot}}\) is a \(\big(\alpha_H+\alpha_{\theta F}\alpha_K,\ 1+a_H+a_{\theta F}+a_K,\ \epsilon_{\mathrm{tot}}\big)\) block encoding of \(I\!\otimes\!H+i\,\theta F_h\!\otimes\!K\).}
\label{fig:block-encoding-total}
\end{figure}

Having obtained a block encoding of $I\otimes H + i\theta F_h \otimes K$, we can invoke standard Hamiltonian simulation algorithms to realize the time evolution. As these algorithms are well-established primitives, we do not elaborate on them further here.

\begin{algorithm}[t]
\caption{From block encodings of $H$, $K$, and $\theta F_h$ to a block encoding of $I\!\otimes\! H + i\,\theta F_h\!\otimes\! K$}
\label{alg:combine-and-simulate}
\begin{algorithmic}[1]
\State \textbf{Inputs:} 
$U_H$: $(\alpha_H,a_H,\epsilon_H)$ block encoding of $H$; \ 
$U_K$: $(\alpha_K,a_K,\epsilon_K)$ block encoding of $K$; \ 
$U_{\theta F}$: $(\alpha_{\theta F},a_{\theta F},\epsilon_{\theta F})$ block encoding of $\theta F_h$.
\State \textbf{Block encoding of $\theta F_h \otimes K$:} Construct $U_{\theta F\otimes K}:=U_{\theta F}\otimes U_K$, which is an $(\alpha_{\theta F}\alpha_K,\ a_{\theta F}{+}a_K,\ \epsilon_{\theta F\otimes K})$ block encoding of $\theta F_h\otimes K$ with $\epsilon_{\theta F\otimes K}=\epsilon_{\theta F}+\epsilon_K+\epsilon_{\theta F}\epsilon_K$.
\State \textbf{Combining $I\otimes H$ and $\theta F_h \otimes K$:} Apply $\mathrm{Prep}_{\text{tot}}=R_Y\!\big(2\arctan\sqrt{\alpha_{\theta F}\alpha_K/\alpha_H}\big)$, then $\mathrm{Select}_{\text{tot}}=\ket{0}\!\bra{0}\otimes(I\otimes U_H)\;+\;\ket{1}\!\bra{1}\otimes (i\,U_{\theta F\otimes K})$, and finally $\mathrm{Prep}_{\text{tot}}^\dagger$. The resulting unitary is $U_{\mathrm{tot}}=(\mathrm{Prep}_{\text{tot}}^\dagger\!\otimes I)\,\mathrm{Select}_{\text{tot}}\,(\mathrm{Prep}_{\text{tot}}\!\otimes I)$.
\State \textbf{Output:} $U_{\mathrm{tot}}$, an $(\alpha_{\mathrm{tot}},\ 1{+}a_H{+}a_{\theta F}{+}a_K,\ \epsilon_{\mathrm{tot}})$ block encoding of $I\!\otimes\! H + i\,\theta F_h\!\otimes\! K$, where $\alpha_{\mathrm{tot}}=\alpha_H+\alpha_{\theta F}\alpha_K$ and $\epsilon_{\mathrm{tot}}=\frac{\alpha_H}{\alpha_{\mathrm{tot}}}\,\epsilon_H+\frac{\alpha_{\theta F}\alpha_K}{\alpha_{\mathrm{tot}}}\,\epsilon_{\theta F\otimes K}$.
\end{algorithmic}
\end{algorithm}

\subsection{Evaluation step}
\label{sec:evaluate}

Finally, we describe the Evaluation step, where the ancilla is projected onto $\bra{l_h}$. The outcome of the simulation under the dilated Hamiltonian is
\[
U_E(T,0)\bigl(\ket{r_h}\otimes\ket{x_0}\bigr),
\]
where
\[
U_E(T,0)=\mathcal{T}\exp\!\left(-i\int_{0}^{T}\Big(I\otimes H(s)+i\theta F_h\otimes K(s)\Big)ds\right).
\]

As defined in Eq.~\eqref{eq:lh}, the evaluation functional is
\[
\bra{l_h}
= C_{M,\theta}\Bigl(\tfrac{M}{x}\Bigr)^\beta \bra{x},
\qquad x \in \Imid.
\]

In the language of quantum circuits, this corresponds to measuring the $m$ ancilla qubits introduced for the dilation and postselecting with outcome $\ket{x}$ for $x\in\Imid$. Since $M+1=2^m$, this is equivalent to postselecting when the two most significant bits of the $m$-qubit register are observed as $01$ or $10$. By applying amplitude amplification, one can boost the probability of obtaining such outcomes, and Theorem~\ref{thm:main} ensures that the system register is then prepared in a state close to
\[
\frac{\mathcal{T}e^{\int_0^T A(s)\,ds}\ket{x_0}}
{\bigl\|\mathcal{T}e^{\int_0^T A(s)\,ds}\ket{x_0}\bigr\|}.
\]

For amplitude amplification, we require two reflection operators~\cite{brassard2000quantum}. The first one is the oracle $S_\chi$, which flips the phase of states where the two most significant bits are $01$ or $10$, and acts trivially otherwise:
\[
S_\chi \ket{x} =
\begin{cases}
-\ket{x}, & \text{if the two most significant bits of $x$ are $01$ or $10$,}\\
\phantom{-}\ket{x}, & \text{otherwise}.
\end{cases}
\]
The second one is the reflection operator $S_0$, which flips the phase of the all-zero state:
\[
S_0 \ket{x} =
\begin{cases}
-\ket{0}, & \text{if } x=\ket{0},\\
\phantom{-}\ket{x}, & \text{otherwise}.
\end{cases}
\]

Here, $S_\chi$ can be implemented using two Pauli-$Z$ gates, while $S_0$ requires $O(m)$ single-qubit gates together with one $(m-1)$-bit Toffoli gate. The corresponding circuits are shown in Fig.~\ref{fig:reflections}.

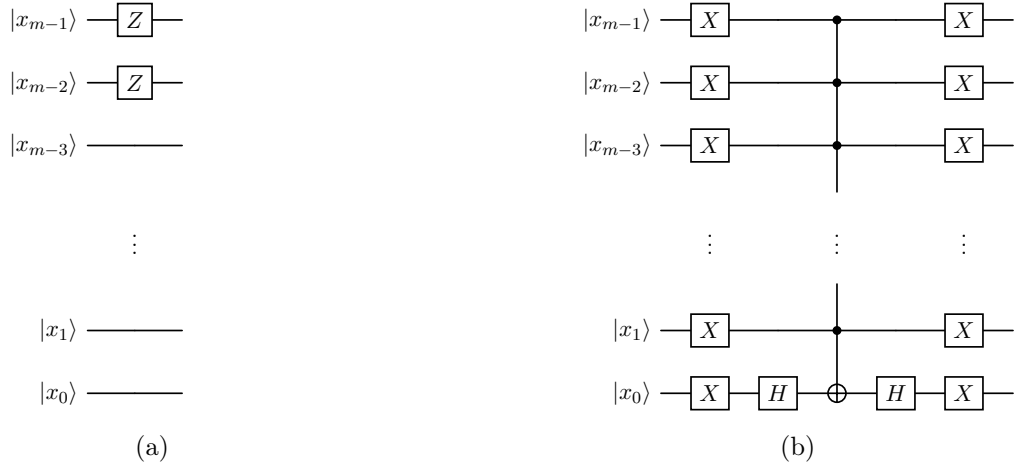
\begin{figure}[t]
    \begin{subfigure}{0.48\linewidth}
        \centering
        \scalebox{0.8}{
            \begin{quantikz}
                \lstick[1]{\ket{x_{m-1}}}& \gate{Z} &\\
                \lstick[1]{\ket{x_{m-2}}}& \gate{Z} &\\
                \lstick[1]{\ket{x_{m-3}}}& \ghost{Z} &\\
                \setwiretype{n} & & & & & &\\[0.001cm]
                \setwiretype{n} & \vdots &\\
                \setwiretype{n} & & & & & &\\[0.001cm]
                \lstick[1]{\ket{x_1}}& \ghost{Z} &\\
                \lstick[1]{\ket{x_0}}& \ghost{Z} &
            \end{quantikz}
            }
        \caption{}
    \end{subfigure}
    \hfill
    \begin{subfigure}{0.48\linewidth}
        \centering
        \scalebox{0.8}{
            \begin{quantikz}
                \lstick[1]{\ket{x_{m-1}}}& \gate{X} & & \control{} \wire[d][3]{q} & & \gate{X} &\\
                \lstick[1]{\ket{x_{m-2}}}& \gate{X} & & \control{} & & \gate{X} &\\
                \lstick[1]{\ket{x_{m-3}}}& \gate{X} & & \control{} & & \gate{X} &\\
                \setwiretype{n} & & & & & &\\[0.001cm]
                \setwiretype{n} & \vdots & & \vdots & & \vdots &\\
                \setwiretype{n} & & & & & &\\[0.001cm]
                \lstick[1]{\ket{x_1}}& \gate{X} & & \control{} \wire[u][1]{q}\wire[d][1]{q}& & \gate{X} &\\
                \lstick[1]{\ket{x_0}}& \gate{X} & \gate{H} & \targ{} & \gate{H} & \gate{X} &
            \end{quantikz}
            }
        \caption{}
    \end{subfigure}
    \caption{\textbf{Reflection operators for amplitude amplification.} 
    (a) Implementation of $S_\chi$ using two Pauli-$Z$ gates. 
    (b) Implementation of $S_0$, which requires $O(m)$ single-qubit gates and one $(m-1)$-bit Toffoli gate, decomposable into $2m-5$ standard Toffolis with an additional ancilla qubit.}
    \label{fig:reflections}
\end{figure}

\section{Summary and Discussion}

In this work we established a concrete pipeline that connects the mathematical dilation framework for linear differential equations with explicit quantum circuit realizations. On the analytical side, we showed that the discretized ancillary generator achieves skew-Hermitian structure on $\C^{M+1}$, allowing the moment conditions to be satisfied without resorting to artificial subspaces. By Theorem~\ref{thm:main}, if $\theta K_{\max}T \le 1/(8e)$ the global error admits the bound $\mathcal{O}(M^{-3/2}+M\,2^{-M/4})$, which decreases as the number of grid points $M+1$ increases. On the algorithmic side, we demonstrated that the triple $(F_h,|r_h\rangle,\langle l_h|)$ can be implemented efficiently on a gate-based quantum computer. This is achieved using linear combinations of unitaries with simple primitives such as QFT-adders, together with QSVT for the preparation of $\ket{r_h}$. Preparing the block encoding of $F_h$ requires only $\mathcal{O}((\log M)^2)$ gates, while the block encodings of $H$ and $K$ depend on the physical system under study. It should be noted, however, that the normalization factor scales as $\alpha_D=\mathcal{O}(M)$, so Hamiltonian simulation incurs a multiplicative overhead of $\mathcal{O}(M(\log M)^2)$. Thus the overall scheme captures the trade-off between circuit complexity and discretization error, enabling accurate simulation within the Hamiltonian simulation framework.

The broader implication of these results is that certain dissipative or open-system models, such as viscoelastic wave equations and other dissipative PDEs, can in principle be simulated within the Hamiltonian simulation framework when embedded via the proposed dilation. In this way, the methods developed here extend the applicability of quantum simulation beyond purely unitary dynamics, toward settings where loss and dissipation are intrinsic features.

At the same time, it is important to acknowledge some limitations of this work. The present analysis assumes that the dissipative part $K(t)$ is negative semidefinite; when this condition is violated, new instabilities and growth may arise, and understanding how the dilation behaves in such cases is an important open question. Moreover, while we derived resource counts and error guarantees, we did not perform explicit simulations; an immediate next step is the realization of small-scale circuits. Another promising direction is to investigate the use of higher-order SBP operators, which would tighten the error bounds, and to analyze how the increased stencil width affects circuit complexity when block-encoded on quantum hardware.

In summary, this paper highlights how dilation-based approaches, combined with quantum simulation algorithms, can make linear nonunitary dynamics accessible on quantum hardware. Looking ahead, further refinements and experimental validation are expected to bring such methods closer to practical applications in science and engineering.

\printbibliography

\end{document}